\newtheorem{theorem}{Theorem}
\newtheorem{prop}{Proposition}
\newtheorem{defi}{Definition}
\newtheorem{cor}{Corollary}
\newtheorem{example}{Example}{\bf}{}
{\bf}{}
\def\ones{\mathds{1}}
\newtheorem{remark}{Remark}
\journal{XYZ}
\begin{document}

\begin{frontmatter}



\title{Hybrid consensus for multi-agent systems with time-driven jumps}


\author{Andrea Cristofaro}
\author{Mattia Mattioni}
\address{{Dipartimento di Ingegneria Informatica, Automatica e Gestionale \emph{A. Ruberti} (Universit\`a degli Studi di Roma La Sapienza)},
          {via Ariosto 25}, 
           {00185} {Rome},
           {Italy}\\
           email: \texttt{andrea.cristofaro@uniroma1.it, mattia.mattioni@uniroma1.it} }

\begin{abstract}
In this paper, the behavior of scalar multi-agent systems over networks subject to time-driven jumps. Assuming that all agents communicate through distinct communication digraphs at jump and flow times, the asymptotic multi-consensus behavior of the hybrid network is explicitly characterized. The hybrid multi-consensus is shown to be associated with a suitable partition that is almost equitable for both the jump and flow communication digraphs. In doing so, no assumption on the underlying digraphs is introduced. Finally, the coupling rules making the multi-consensus subspace attractive are established. Several simulation examples illustrate the theoretical results.
\end{abstract}
\begin{keyword}
Multi-agent systems; 
Hybrid systems;
Consensus;
Impulsive systems;
Graph theory.
\end{keyword}
\end{frontmatter}

\section{Introduction}

Networked systems are nowadays well-considered a bridging paradigm among several disciplines spanning, among many others, from physics to engineering, psychology to medicine, biology to computer science. 
As typical in control theory \citep{isidori2017lectures}, we refer to a network (or multi-agent) system as composed of several dynamical units (agents) interconnected through a \emph{communication graph}: each node of the communication graph uniquely corresponds to one dynamical unit whereas edges model the corresponding exchange of information among agents. As a consequence, even for simple agents and with no issue in the network interconnection (e.g., time-delays), the network behavior is described by a complex dynamical system.

In this regard, several works have been devoted to providing methodological understanding on the collective behavior induced by the network interaction through the graph. Most of control problems involving network systems are generally related to driving all systems composing the network toward a \emph{consensus} behavior that might be global for all individuals or common only to some clusters uniquely identified by the graph (e.g., \citep{jadbabaie2003coordination, li2009consensus, aeyels2010emergence, ren2010distributed, chen2011cluster, egerstedt2012interacting, monshizadeh2015disturbance, wang2019collective}).
All of those works are devoted to the case of either continuous or discrete-time networks with possibly continuously time-varying communication topology. However, inspired by practical scenarios with heterogeneous samplings and discrete-event phenomena \citep{6338672, vetrella2017satellite, ferrante2015hybrid, de2018consensus, mattioni2020multiconsensus} and as deeply motivated in \cite{MAGHENEM2020335}, it is reasonable to allow for a hybrid communication topology, i.e., showing an interplay between a continuous-time behavior and event-triggered switchings. This setup embeds several practical scenarios, such as swarms of satellites communicating and sensing through different kind of sensors (i.e., radars and cameras), \citep{nag2013behaviour}, sampled-data agents at large \citep{barkai2021sampled} or cyber-security in networked systems \citep{duz2018stealthy}. To model the class of hybrid multi-agent systems considered in this paper, we adopt the general framework for analysis and design of hybrid control systems proposed in \citep{goebel2012hybrid}. Indeed, we consider a fixed group of autonomous agents whose coupling rules show two different regimes: a continuous-time flow of information associated with a given communication graph (the flow graph) and an intermittent update encoded by an other communication graph (the jump graph). In this set-up, we study the effects of the hybrid coupling functions and the network topology on the agents' trajectories. In particular, we are interested in the multi-consensus problem, that is, the definition of the coupling functions making the agents cluster into several subgroups possessing the same steady-state induced by the hybrid network. The single-consensus problem has been addressed in a similar hybrid setting \citep{7835664, ZHENG201951, guan2011consensus}. 
Assuming hence distinct (but general) digraphs characterizing the network during flow and jump times, the contribution of this paper is hence twofold: 
\begin{itemize}
    \item first, we characterize the number and type of consensuses over the hybrid multi-agent systems depending on the structure of the flow and jump digraphs with no assumption on the corresponding connectivity properties; 
    \item then, the coupling functions are designed in order to guarantee convergence to the consensus subspace.
\end{itemize}
In doing so, the consensus behaviors are explicitly characterized based on the notion of Almost Equitable Partitions (AEPs, \citep{godsil2013algebraic, cardoso2007laplacian}) firstly exploited for the characterization of multi-consensus in a purely continuous-time context in \citep{monaco2019multi}. In particular, we found that the multi-consensuses are uniquely identified by the joint effect of the flow and jump graphs and, in particular, the coarsest AEP that is common to both jump and flow communication digraphs; nodes belonging to the same cell of the partition converge to the same steady-state that is not necessarily a constant evolution but it is allowed to be a hybrid trajectory. 

\smallskip

The remainder of the paper is organized as follows. A review on graph theory and the multi-consensus problem for continuous-time networks is given in Section \ref{sec:recalls}. Then, the class of system under investigation is introduced in Section \ref{sec:hybrid_mas} and the problem formulated in Section \ref{sec:hybrid_cons_probl}. The main results are established in Section \ref{sec:main_result} with illustrating examples in Section \ref{sec:example}. Finally, conclusions are drawn in Section \ref{sec:conclusions}. 

\medskip
\emph{Notations.} $\mathbb{C}^+$ and $\mathbb{C}^-$ denote the right and left hand side of the complex plane. For a given a finite set $\mathcal{S}$, $|\mathcal{S}|$ denotes its cardinality. For a closed set $\mathcal{S}\subset \mathbb{R}^n$ and $x \in \mathbb{R}^n$, $\mathrm{dist}(x, \mathcal{S})$ denotes the distance of $x$ from the set $\mathcal{S}$. 
We denote by $0$ either the zero scalar or the zero matrix of suitable dimensions. $\ones_c$ denotes the $c$-dimensional column vector whose elements are all ones while $I$ is the identity matrix of suitable dimensions. Given a matrix $A \in \mathbb{R}^{n\times n}$ then $\sigma\{A \}\subset \mathbb{C}$ denotes its spectrum. A positive definite (semi-definite) matrix $A = A^\top$ is denoted by $A \succ 0$ ($A\succeq 0$); a negative definite (semi-definite) matrix $A = A^\top$ is denoted by $A\prec 0$ ($A\preceq 0$).

\section{Review on continuous-time systems over networks} \label{sec:recalls}
\subsection{Directed graph Laplacians and almost equitable partitions}

Let $\mathcal{G} = (\mathcal{V}, \mathcal{E})$ be a directed graph (or digraph for short) with $|\mathcal{V}| = N$, $\mathcal{E} \subseteq \mathcal{V} \times \mathcal{V}$. The set of neighbors to a node $\nu\in \mathcal{V}$ is defined as $\mathcal{N}(\nu) = \{ \mu \in \mathcal{V} \text{ s.t. } (\mu, \nu) \in \mathcal{E} \}$. For all pairs of distinct nodes $\nu, \mu \in \mathcal{V}$, a directed path from $\nu$ to $\mu$ is defined as $\nu \leadsto \mu := \{ (\nu_{r}, \nu_{r+1} )\in \mathcal{E} \text{ s.t. } \cup_{r = 0}^{{\ell-1}} (\nu_{r}, \nu_{r+1}) \subseteq \mathcal{E} \text{ with } \nu_{0} = \nu, \nu_{\ell} = \mu \text{ and } \ell >0\}$. The reachable set from a node $\nu \in \mathcal{V}$ is defined as $R(\nu) :=\{\nu\} \cup \{\mu\in \mathcal{V} \text{ s.t. }  \nu \leadsto \mu\} $. A set $\mathcal{R}$ is called a reach if it is a maximal reachable set, that is, $\mathcal{R} = R(\nu)$ for some $\nu\in\mathcal{V}$ and there is no $\mu\in \mathcal{V}$ such that $R(\nu) \subset R(\mu)$. Since $\mathcal{G}$ possesses a finite number of vertices, such maximal sets exist and are uniquely determined by the graph itself. Denoting by $\mathcal{R}_i$ for $i = 1, \dots, \mu$, the reaches of $\mathcal{G}$, the exclusive part of $\mathcal{R}_i$ is defined as $\mathcal{H}_i = \mathcal{R}_i/\cup_{\ell = 1, \ell \neq i}^\mu \mathcal{R}_\ell$ with cardinality $h_i = |\mathcal{H}_i|$. Finally, the common part of $\mathcal{G}$ is given by $\mathcal{C} = \mathcal{V}/\cup_{i = 1}^\mu \mathcal{H}_i$ with cardinality $c = |\mathcal{C}|$.

Given two graphs $\mathcal{G}_1 = (\mathcal{V}, \mathcal{E}_1)$ and $\mathcal{G}_2 = (\mathcal{V}, \mathcal{E}_2)$ one defines the \emph{union graph} $\mathcal{G} = \mathcal{G}_1 \cup \mathcal{G}_2 = (\mathcal{V}, \mathcal{E}_1 \cup \mathcal{E}_2)$ and, similarly, the \emph{intersection graph} as $\mathcal{G} = \mathcal{G}_1 \cap \mathcal{G}_2 = (\mathcal{V}, \mathcal{E}_1 \cap \mathcal{E}_2)$.

The Laplacian matrix associated with $\mathcal{G}$ is given by $L = D-A$ with $D\in \mathbb{R}^{N\times N}$ and $A\in\mathbb{R}^{N\times N}$ being respectively the in-degree and the adjacency matrices. As proved in \citep{agaev2005spectra}, $L$ possesses one eigenvalue $\lambda = 0$ with multiplicity coinciding with $\mu$, the number of reaches of $\mathcal{G}$, and the remaining $N-\mu$ with positive real part. Hence, after a suitable re-labeling of nodes, the Laplacian always admits the lower triangular form \citep{caughman2006kernels}
\begin{align}\label{lap_gen}
 L = \begin{pmatrix}
 L_1 & \dots & 0 & 0\\
 & \ddots & & \vdots \\
 0 & \dots & L_\mu & 0\\
 M_1 & \dots & M_\mu & M
 \end{pmatrix}
\end{align}
where: $L_i \in \mathbb{R}^{h_i \times h_i}$ ($i = 1, \dots, \mu$) is the Laplacian associated with the subgraph $\mathcal{H}_i$ and possessing one eigenvalue in zero with single multiplicity; $M\in \mathbb{R}^{c \times c}$ verifying $\sigma(M)\subset \mathbb{C}^+$ corresponds to the common component $\mathcal{C}$. Thus, the eigenspace associated with $\lambda = 0$ for $L$ is spanned by the right eigenvectors 
\begin{align}\label{eig:L}
z_1 = \begin{pmatrix}
\ones_{h_1} \\ 
\vdots \\
0\\
\gamma^1
\end{pmatrix} \quad \dots \quad z_\mu = \begin{pmatrix}
0 \\ 
\vdots \\
\ones_{h_\mu}\\
\gamma^\mu
\end{pmatrix}
\end{align}
with $\sum_{i = 1}^{\mu} \gamma^i = \ones_c$ and $M_i \ones_{h_i} + M \gamma^i  = 0$ for all $i = 1, \dots, \mu$. In addition, the left eigenvectors associated with the zero eigenvalues are given by
\begin{align}\label{left_eig}
    \tilde v_1^\top = \begin{pmatrix} v_1^\top & \dots &  0 & 0
    \end{pmatrix}\  \dots \  \tilde v_\mu^\top = \begin{pmatrix} 0 & \dots & v_\mu^\top & 0
    \end{pmatrix}
\end{align}
with $v_i^\top = \begin{pmatrix} v_i^1 & \dots & v_i^{h_i} \end{pmatrix} \in \mathbb{R}^{1 \times h_i}$, $v_i^s>0$  if the corresponding node is root or zero otherwise.
A partition $\pi = \{\rho_1, \dots, \rho_r\}$ of $\mathcal{V}$ is a collection of cells $\rho_i\subseteq \mathcal{V}$ verifying $\rho_i \cap \rho_j = \emptyset$ for all $i \neq j$ and $\cup_{i = 1}^r \rho_i = \mathcal{V}$. The characteristic vector of $\rho\subseteq \mathcal{V}$ is given by $p(\rho) = (p_1(\rho)\  \dots p_N(\rho))^\top \in \mathbb{R}^N$ with for $i = 1, \dots, N$
\begin{align*}
    p_i(\rho) = \begin{cases}
    1 & \text{if }  v_i \in \rho
    \\ 0 & \text{otherwise}.
    \end{cases}
\end{align*}
For a partition $ \pi = \{\rho_1, \dots, \rho_r\}$ of $\mathcal{V}$, the characteristic matrix of $\pi$ is $P(\pi) = \begin{pmatrix} p(\rho_1) & \dots & p(\rho_r)  \end{pmatrix}$ with $\mathcal{P} = \text{Im} P(\pi)$ with, by definition of partition, each row of $P(\pi)$ possessing only one element equal to one and all other being zero. Given two partitions $\pi_1$ and $\pi_2$, $\pi_1$ is said to be finer than $\pi_2$ ($\pi_1\preceq \pi_2$) if all cells of $\pi_1$ are a subset of some cell of $\pi_2$ so implying $\text{Im}P(\pi_2)\subseteq \text{Im}P(\pi_1)$; equivalently, we say that $\pi_2$ is coarser than $\pi_1$ ($\pi_2 \succeq \pi_1$), with $\text{Im}P(\pi_1)\subseteq \text{Im}P(\pi_2)$. We name $\pi = \mathcal{V}$ the trivial partition as composed of a unique cell with all nodes.
Given a cell $\rho\in \mathcal{V}$ and a node $\nu_i \notin \rho$, we denote by $\mathcal{N}(\nu_i,\rho) = \{ \nu \in \rho \text{ s.t } (\nu, \nu_i) \in \mathcal{E} \}$  the set of neighbors of $\nu_i$ in the cell $\rho$. 
\smallskip
\begin{defi}\label{def:AEP}
 A partition $\pi^\star = \{ \rho_1, \rho_2, \ldots, \rho_k \}$ is said to be
 an \emph{almost equitable partition (AEP)} of $\mathcal{G}$ if, for each $i,j \in \{1, 2, \ldots, k \}$, with $i \neq j$, there exists an integer $d_{ij}$ such that $| \mathcal{N}(\nu_i,\rho_j) | = d_{ij}$ for all $\nu_i \in \rho_i$.
\end{defi}
\smallskip
In other words, a partition such that each node in $\rho_i$ has the same number of
neighbors in $\rho_j$, for all $i,j$ with $i \neq j$, is an AEP.
The property of almost equitability is equivalent to the invariance of the subspaces generated by the characteristic vectors of its cells. In particular, we can give the following equivalent
characterization of an AEP  $\pi^\star$ \citep{monaco2019multi,monshizadeh2015disturbance}.
\smallskip
\begin{prop}\label{def:AEP_bis}
 Consider a graph $\mathcal{G}$ and a partition $\pi^\star = \{ \rho_1, \rho_2, \ldots, \rho_k \}$ with $\mathcal{P}^\star  = \text{Im} P(\pi^\star )$. $\pi^\star$ is an \emph{almost equitable partition (AEP)} if and only if $\mathcal{P}^\star$ is $\mathcal{L}$-invariant, that is,
 \begin{equation}\label{eq:prop_AEP}
 L\mathcal{P}^\star \subseteq \mathcal{P}^\star .
\end{equation}
\end{prop}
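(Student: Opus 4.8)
The plan is to translate the subspace-invariance condition \eqref{eq:prop_AEP} into a pointwise statement about the Laplacian acting on the characteristic vectors of the cells, and then to read off the AEP counting condition of Definition \ref{def:AEP} directly from the resulting entries. Since the columns $p(\rho_1),\dots,p(\rho_k)$ of $P(\pi^\star)$ span $\mathcal{P}^\star$, the inclusion $L\mathcal{P}^\star \subseteq \mathcal{P}^\star$ is equivalent to $L\,p(\rho_j)\in\mathcal{P}^\star$ for every $j\in\{1,\dots,k\}$. Moreover, because each row of $P(\pi^\star)$ has exactly one nonzero entry, a vector $w\in\real^N$ lies in $\mathcal{P}^\star=\text{Im}\,P(\pi^\star)$ precisely when $w$ is constant on each cell of $\pi^\star$. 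Hence the whole statement reduces to the claim that $\pi^\star$ is an AEP if and only if, for every $j$, the vector $L\,p(\rho_j)$ takes a constant value on each cell $\rho_i$.

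First I would compute the $\nu$-th entry of $L\,p(\rho_j)$. Writing $L=D-A$ with $D$ the in-degree matrix and $A$ the adjacency matrix, one obtains $\big(L\,p(\rho_j)\big)_\nu = |\mathcal{N}(\nu)|\,p_\nu(\rho_j) - |\mathcal{N}(\nu)\cap\rho_j|$. I would then split into two cases. If $\nu\in\rho_i$ with $i\neq j$, then $p_\nu(\rho_j)=0$ and the entry equals $-|\mathcal{N}(\nu,\rho_j)|$; if instead $\nu\in\rho_j$, then $p_\nu(\rho_j)=1$ and the entry equals $|\mathcal{N}(\nu)|-|\mathcal{N}(\nu)\cap\rho_j|=\sum_{\ell\neq j}|\mathcal{N}(\nu,\rho_\ell)|$.

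For the sufficiency direction, invariance forces $\big(L\,p(\rho_j)\big)_\nu$ to be constant over $\nu\in\rho_i$; taking any pair $i\neq j$, this is exactly the requirement that $|\mathcal{N}(\nu,\rho_j)|$ be independent of the choice of $\nu\in\rho_i$, i.e., Definition \ref{def:AEP} with $d_{ij}:=|\mathcal{N}(\nu,\rho_j)|$. For the necessity direction, assuming the AEP property, the case $i\neq j$ gives the constant value $-d_{ij}$ on $\rho_i$, while the case $i=j$ gives the constant value $\sum_{\ell\neq j}d_{j\ell}$ on $\rho_j$; hence $L\,p(\rho_j)$ is constant on every cell and therefore belongs to $\mathcal{P}^\star$, and letting $j$ range over $\{1,\dots,k\}$ yields \eqref{eq:prop_AEP}.

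There is no genuine analytic difficulty here: the argument is essentially a bookkeeping exercise with the in-degree/adjacency splitting of $L$. The one point that deserves a line of care is the diagonal block, namely $\nu$ lying in the very cell $\rho_j$ whose characteristic vector is acted upon: one must observe that constancy of $L\,p(\rho_j)$ on $\rho_j$ is automatically implied by the off-diagonal counting conditions, rather than being an extra hypothesis, which is immediate from the identity $\big(L\,p(\rho_j)\big)_\nu=\sum_{\ell\neq j}|\mathcal{N}(\nu,\rho_\ell)|$. A secondary concern is simply to keep the graph conventions (in-degree Laplacian, $\mathcal{N}(\nu)$ the set of in-neighbors) consistent with those fixed earlier, so that the signs in the computation come out as claimed.
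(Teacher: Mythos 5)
Your argument is correct: the entrywise computation $\bigl(L\,p(\rho_j)\bigr)_\nu = |\mathcal{N}(\nu)|\,p_\nu(\rho_j) - |\mathcal{N}(\nu)\cap\rho_j|$, together with the observation that membership in $\mathrm{Im}\,P(\pi^\star)$ means constancy on cells, gives both directions cleanly, and you rightly note that the diagonal case $\nu\in\rho_j$ is automatic from the off-diagonal counts. The paper itself gives no proof of this proposition (it is quoted from the cited references), and your proof is exactly the standard argument used there, so there is nothing to add.
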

We say that a non trivial partition $\pi^\star$ is the coarsest AEP of $\mathcal{G}$ if  $\pi^\star \succeq \pi$ for all non trivial $\pi$ AEP of $\mathcal{G}$ and, equivalently, $\text{Im}P(\pi^\star) \subseteq \text{Im}P(\pi)$.  Algorithms for computing almost equitable partitions are available for arbitrary unweighted digraphs such as, among several others, the one in \citep{monshizadeh2015disturbance}.

\subsection{Multiconsensus of continuous-time systems}

As proved in \citep{monaco2019multi}, when considering multi-agent systems the notion of AEP is linked to the characterization of \emph{multi-consensus}. Roughly speaking, consider a set of $N>1$ of scalar integrators of the form
\begin{align*}
    \dot x_i = u_i
\end{align*}
with $x_i \in \mathbb{R}$ and continuously exchanging information based on a communication graph $\mathcal{G} = (\mathcal{V}, \mathcal{E})$ whose vertices $\nu_i \in \mathcal{V}$ correspond to the $i^\text{th}$ agent with state $x_i$ ($i = 1, \dots, N$). Then, under the coupling rule
\begin{align*}
    u_i = -\sum_{\ell{ : } \nu_\ell \in \mathcal{N}(\nu_i)} (x_i - x_\ell)
\end{align*}  
 nodes asymptotically cluster into $r = \mu+k$ consensuses for some $k\in \mathbb{N}$ such that $1\leq k \leq c$; those clusters are uniquely defined by the coarsest AEP $\pi^\star = \{\rho_1, \dots, \rho_r\}$ of $\mathcal{G}$: the states of all agents belonging to the same cell of the AEP converge to the same consensus state. More in detail, with a slight abuse of notation and exploiting (\ref{lap_gen}),the AEP underlying the multi-consensus of the network is $\pi^\star = \{\mathcal{H}_1, \dots, \mathcal{H}_\mu, \rho_{\mu+1}, \dots, \rho_{\mu+k} \}$ with  $\mathcal{C} = \cup_{\ell = 1}^k \rho_{\mu+\ell} $ and all nodes in $\rho_{\mu+\ell}$ (of cardinality $c_\ell  = |\rho_{\mu+\ell} |$) sharing the same components of the vectors $\gamma^i$ for all $i = 1, \dots, \mu$. More precisely, setting in (\ref{eig:L})
 \begin{align*}
     \gamma^i = \begin{pmatrix}
     \gamma^i_1 \\
     \vdots \\
     \gamma^i_{c}
     \end{pmatrix}\in\mathbb{R}^c, \quad i = 1, \dots, \mu
 \end{align*}
a node $\nu_{N-c+m_1}\in \mathcal{C}$ (with  $m_1 \in \{1, \dots, c\}$) belongs to the cell $\rho_{\mu+\ell}\subseteq \mathcal{C}$ if and only if $  \gamma^i_{m_1} = \gamma^i_{m_2}$ for all $\nu_{N-c+m_2}\in \rho_{\mu+\ell}$, $i = 1, \dots, \mu$ and some  $m_2 \in \{1, \dots, c\}$. Accordingly, the number of cells partitioning $\mathcal{C}$ (i.e., the integer $k$) coincides with the number of distinct coefficients of the vector $\gamma^i$ with $i = 1, \dots, \mu$.

Consider now the agglomerate network dynamics
\begin{align*}
    \dot x = -L x
\end{align*}
with $x = (x_1 \ \dots x_N)$, $x_i \in \mathbb{R}$, $L$ the communication graph Laplacian, and the sub-states  when $c_0 = h_0 = 0$
\begin{align*}
    \textbf{x}_i =& \text{col}\{x_{h_1+\dots +h_{i-1} +1}, \dots, x_{h_1+\dots +h_i} \} \in \mathbb{R}^{h_i} \\
    \textbf{x}_\ell =&
     \text{col}\{x_{N-c+c_1+\dots + c_{\ell-1} }, \dots, x_{N-c+c_1+\dots + c_{\ell}} \} \in \mathbb{R}^{c_\ell}
\end{align*} 
for  $i = 1, \dots, \mu$ and $\ell = 1, \dots, k$. One can further rewrite each $\gamma^i \in \mathbb{R}^c$ in (\ref{eig:L}) as
\begin{align}\label{eq:gammai_n}
\gamma^i = \begin{pmatrix}
\gamma^i_1 \ones_{c_1} \\
\vdots \\
\gamma^i_{k}\ones_{c_k}
\end{pmatrix}\in\mathbb{R}^c, \quad \gamma^i_\ell \in \mathbb{R}, \quad \ell = 1, \dots, k
\end{align}
with $c_\ell$ the cardinality of the cell $\rho_{\mu+\ell} \subseteq \mathcal{C}$ (i.e., $c_{\ell} = |\rho_{\mu+\ell}|$) so that, as $t\to \infty$, the following holds true: 
\begin{enumerate}
    \item nodes in the same reach $\mathcal{H}_i$ converge to the same consensus value; namely, for $i = 1, \dots, \mu$ and all $\nu_j \in \mathcal{H}_i$
    \begin{align*}
        \mathbf{x}_i(t) \to x_{i}^{ss} \ones_{h_i}, \quad x_{i}^{ss}:= v_i^\top \mathbf{x}_i(0)\in \mathbb{R}
    \end{align*}
    with $v_i^\top\in \mathbb{R}^{h_i}$ being the corresponding component of the left eigenvector (\ref{left_eig}); 
    \item nodes in $\mathcal{C}$ belonging to the same cell $\rho_{\mu+\ell}$ converge to a convex combination of the consensuses induced by the reaches; namely,
    \begin{align*}
        \mathbf{x}_\ell(t) \to x^{ss}_{c_\ell}\ones_{c_\ell},\quad x^{ss}_{c_\ell} := \sum_{i = 1}^\mu \gamma^{i}_\ell x_{i}^{ss}
    \end{align*}
    and $\gamma^i_\ell \in \mathbb{R}$ the distinct components of the vector $\gamma^i\in \mathbb{R}^c$ in (\ref{eq:gammai_n}), for $\ell = 1, \dots, k$.
\end{enumerate}

\section{Hybrid multi-agent systems} \label{sec:hybrid_mas}
We consider a group of $N\in\mathbb{N}$ identical agents, whose state is assumed to be a scalar $x_i\in\mathbb{R}$ for any $i=1,...,N$.
The evolution of each agent state is assumed to be governed by a hybrid dynamics, i.e, characterized by the interplay of a continuous-time and a discrete-time behaviour \citep{goebel2012hybrid}. The alternate selection of continuous and discrete dynamics can be either driven by specific time patterns or triggered by conditions on the state. In this paper we consider agents whose dynamics is given by a hybrid integrator with time-driven jumps. In particular, the state of each agent is assumed to obey the following update law:
\begin{equation}\label{eq:hybrid_agent}
\begin{array}{ll}
\dot{x}_i(t)=u_i(t)& t\in\mathbb{R^+}\setminus\mathcal{J}\smallskip\\
x_i^+(t)=x_i(t)+v_i(t)& t\in\mathcal{J}
\end{array}\quad i=1,...,N
\end{equation}
 where $\mathcal{J}$ denotes the sequence of jump instants
 $$
 \mathcal{J}=\{t_j \in\mathbb{R}^+, j=1,...,\aleph_\mathcal{J}: t_{j}<t_{j+1},\ \aleph_{\mathcal{J}}\in\mathbb{N}\cup\infty\}
 $$
 and where $u_i(t),v_i(t)\in\mathbb{R}$ are control inputs and $$\tau_\text{min}<t_{j+1}-t_j<\tau_\text{max}\qquad \forall j\in\mathcal{J}.$$ The differential equation in (\ref{eq:hybrid_agent}) is referred to as {\it flow dynamics}, whereas the difference equation corresponds to the {\it jump dynamics}. To keep track of the jumps, it is convenient to introduce the notion of hybrid time domain as a special case of \citep[Definition~2.3]{goebel2012hybrid}.\smallskip
 \begin{defi}
 A hybrid time domain is a set $\mathcal{T}$ in $[0,\infty)\times\mathbb{N}$ defined as the union of {\it indexed intervals}
\begin{equation}\label{eq:time_domain}
\mathcal{T}:=\bigcup_{j\in\mathbb{N}}\big\{\left[t_j,t_{j+1}\right]\times \{j\}\big\}
\end{equation}
Given an hybrid time domain, its length is defined as $length(\mathcal{T})=\sup_t\mathcal{T}+\sup_j\mathcal{T}.$
A hybrid time domain is said  $\tau$-periodic, for some constant $\tau>0$, if $t_{j+1} - t_{j} = \tau$ for any $j \in \mathbb{N}. \hfill\triangle$  
 \end{defi}\smallskip
 In view of the latter definition, we can enhance the notation for the state of the agents as follows
 $$
 x_i(t,j)\ \textrm{with\ }(t,j)\in\mathcal{T},\ i=1,...,N.
 $$
 As previously mentioned, the agents are supposed to be connected through a suitable communication graph. However, due to the hybrid nature of the problem, it is reasonable to allow interactions among the agents with different topologies for the flow dynamics and the jump dynamics. To this end, let us consider a flow graph $\mathcal{G}_\mathrm{f}=(\mathcal{V}_\mathrm{f},\mathcal{E}_\mathrm{f})$ and a jump graph $\mathcal{G}_\mathrm{j}=(\mathcal{V}_\mathrm{j},\mathcal{E}_\mathrm{j})$ where the sets of vertices satisfy
 $$
\mathcal{V} := \mathcal{V}_\mathrm{f}=\mathcal{V}_\mathrm{j}
 $$
 as the number of agents remains constant upon jumps. Accordingly, we aim at designing the control inputs in a decentralized way as functions of the states of agent $i$ and its neighbours, i.e.,
 $$
 \begin{array}{l}
 u_i=f_i(x_i,\{x_k:\nu_k \in\mathcal{N}_{\mathrm{f},i}\})\smallskip\\
 v_i=g_i(x_i,\{x_k:\nu_k\in\mathcal{N}_{\mathrm{j},i}\})
 \end{array}
 $$
where the set of indexes
$$
\mathcal{N}_{\mathrm{f},i}=\{\nu_k\in \mathcal{V}:(\nu_k, \nu_i)\in\mathcal{E}_\mathrm{f}\},\quad \mathcal{N}_{\mathrm{j},i}=\{\nu_k\in \mathcal{V}:(\nu_k,\nu_i)\in\mathcal{E}_\mathrm{j}\}
$$
are, respectively, the flow and the jump neighbourhoods of the agent $i$.

\begin{remark}       
Although impulsive multi-agent systems might be modelled in other possible ways (e.g., \citep{haddad2014impulsive, liberzon2003switching}), the choice of recasting the problem in the hybrid formalism is motivated by two main reasons: $(i)$ it is more compact and easily extendable to the case of general hybrid systems (with state-driven jumps too); $(ii)$ it allows to easily handle the geometric properties independently of the jump time-instants and flow intervals. 
\end{remark}

\section{The hybrid consensus problem} \label{sec:hybrid_cons_probl}
It is well known that, in classical multi-agent systems, consensus conditions can be attained by implementing simple linear feedback laws, which can be encoded through the Laplacian matrix of the communication graph. We propose a similar approach here, by setting
\begin{equation}\label{eq:inputs}
\begin{array}{l}
u_i(t)=-\sum_{k : \nu_k \in\mathcal{N}_{\mathrm{f},i}}(x_i(t)-x_k(t))\smallskip \\
v_i(t)=-\alpha\sum_{k: \nu_k \in\mathcal{N}_{\mathrm{j},i}}(x_i(t)-x_k(t))
\end{array}
\end{equation}
where $\alpha>0$ is a suitable gain to be tuned. Defining the cumulative state $x=[x_1\ x_2\ \cdots\ x_N]^\top $, the dynamics of the multi-agent system driven by (\ref{eq:inputs}) can be written as
\begin{equation}\label{eq:hybrid_sys}
\begin{array}{ll}
\dot x(t)=-L_\mathrm{f}x(t)& t\in \mathbb{R}^+\setminus\mathcal{J}\smallskip\\
x^+(t)=(I-\alpha L_{\mathrm{j}})x(t)&t\in\mathcal{J}
\end{array}
\end{equation}
where $L_\mathrm{f}$ and $L_\mathrm{j}$ denote, respectively, the Laplacian matrix of the flow and the jump graph. It is worth noticing that, whilst the negative flow Laplacian $-L_\mathrm{f}$ provides a marginally stable continuous dynamics, the jump map has to be made stable by means of a proper tuning of $\alpha>0$. To this end, we can invoke the following technical result.
\begin{prop}\label{prop:spectrum}
{\it Let $\mathcal{G}=(\mathcal{V},\mathcal{D})$ be a graph with $\bar{N}$ vertices, and let $L\in\mathbb{R}^{\bar{N} \times \bar{N}}$ be its Laplacian matrix. There exists $\alpha>0$ such that the spectrum of $(I-\alpha L)$ satisfy
\begin{eqnarray}
&\sigma(I-\alpha L)\subset \{z\in\mathbb{C}:|z|<1\}\cup\{1\}\label{eq:spectrum1}\smallskip\\
&1\in\sigma(I-\alpha L)\label{eq:spectrum2}
\end{eqnarray}}
\end{prop}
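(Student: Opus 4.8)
The plan is to analyze the spectrum of $L$ and transfer properties to $I - \alpha L$ via the spectral mapping $\lambda \mapsto 1 - \alpha\lambda$. First I would recall from the discussion around \eqref{lap_gen} (the result of \citep{agaev2005spectra}) that the Laplacian $L$ of any digraph $\mathcal{G}$ has $\lambda = 0$ as an eigenvalue with multiplicity equal to the number of reaches $\mu \geq 1$, and all remaining eigenvalues lie in the open right half-plane $\mathbb{C}^+$. In particular $0 \in \sigma(L)$ always, which will immediately give \eqref{eq:spectrum2}: since $1 - \alpha \cdot 0 = 1$, the value $1$ is always in $\sigma(I - \alpha L)$ regardless of $\alpha$.

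Next I would address \eqref{eq:spectrum1}. Write $\sigma(L) = \{0\} \cup \{\lambda_1, \dots, \lambda_{\bar N - \mu}\}$ where each $\lambda_s = a_s + \mathrm{i} b_s$ has $a_s > 0$. Under the map $z \mapsto 1 - \alpha z$, the eigenvalue $0$ goes to $1$, and each $\lambda_s$ goes to $1 - \alpha\lambda_s$. I need $|1 - \alpha\lambda_s| < 1$ for all $s$ and all sufficiently small $\alpha > 0$. Expanding, $|1 - \alpha\lambda_s|^2 = (1 - \alpha a_s)^2 + \alpha^2 b_s^2 = 1 - 2\alpha a_s + \alpha^2(a_s^2 + b_s^2) = 1 - \alpha\big(2a_s - \alpha|\lambda_s|^2\big)$. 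Since $a_s > 0$, this is strictly less than $1$ precisely when $0 < \alpha < 2a_s/|\lambda_s|^2$. Therefore choosing
\begin{equation*}
0 < \alpha < \min_{s = 1, \dots, \bar N - \mu} \frac{2\,\mathrm{Re}(\lambda_s)}{|\lambda_s|^2} =: \bar\alpha
\end{equation*}
(with the convention $\bar\alpha = +\infty$, i.e. any $\alpha > 0$ works, if $\mu = \bar N$ so there are no nonzero eigenvalues) guarantees that every nonzero eigenvalue of $L$ maps inside the open unit disk, while the zero eigenvalue(s) map to $1$. This establishes \eqref{eq:spectrum1} and completes the proof.

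The argument is essentially elementary once the spectral structure of directed-graph Laplacians is in hand, so I do not anticipate a genuine obstacle; the only point requiring a little care is the edge case where $L$ has no nonzero eigenvalues (every component is a single reach with the zero eigenvalue only), for which the minimum over an empty set must be interpreted as $+\infty$ so that the claim "there exists $\alpha > 0$" still holds vacuously for any positive $\alpha$. One should also note that $\bar\alpha > 0$ whenever the minimum is over a nonempty set, since each term $2a_s/|\lambda_s|^2$ is strictly positive and the set of indices is finite; this is what makes the interval $(0, \bar\alpha)$ nonempty and the existence claim meaningful.
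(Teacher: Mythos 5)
Your proof is correct and follows essentially the same route as the paper: the zero eigenvalue of $L$ maps to $1$, and the threshold $\alpha^\star=\min_{\lambda\in\sigma(L)\setminus\{0\}}2\,\mathrm{Re}(\lambda)/|\lambda|^2$ obtained from expanding $|1-\alpha\lambda|^2$ is exactly the bound used there. Your explicit computation (with the correct $\alpha^2\,\mathrm{Im}(\lambda)^2$ term) and the remark on the empty-minimum edge case are fine refinements, but the argument is the same.
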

\begin{proof}
By construction \citep[Theorem 4.6]{veerman2020primer} we know that $\sigma(L)\subset\mathbb{C}^+$ with $\text{dim}\{\ker L\} = \mu \geq 1$, i.e, there exists $w\in\mathbb{R}^{\bar N}$ with
$Lw=0.$ Accordingly, for any $\alpha>0$, one has $(I-\alpha L)w=Iw-\alpha Lw=w$, this showing that $1\in\sigma(I-\alpha L)$. Let us now define
\begin{equation}\label{eq:alphastar}
    \alpha^*:=\min_{\lambda\in\sigma(L)\setminus\{0\}}\frac{2\mathrm{Re}(\lambda)}{|\lambda|^2}.
\end{equation}
Then selecting $\alpha\in(0,\alpha^*)$ guarantees the fulfilment  of condition (\ref{eq:spectrum1}). To verify this fact, we can observe that for any $\lambda\in\sigma(L)\setminus\{0\}$ the number $\tilde{\lambda}=1-\alpha\lambda\in\mathbb{C}$ is a eigenvalue for $I-\alpha L$, with
$$
|\tilde{\lambda}|=\sqrt{(1-\alpha\mathrm{Re}(\lambda))^2+\mathrm{Im}(\lambda)^2}.
$$
By simple algebraic manipulations one can see that $|\tilde{\lambda}|<1$ whenever~$\alpha<\alpha^\star$ with $\alpha^\star$ given by \eqref{eq:alphastar}.
\end{proof}
Our goal is to establish conditions on the combination of graphs $\mathcal{G}_\mathrm{f},\mathcal{G}_\mathrm{j}$ under which a consensus is achieved in the multi-agent system (\ref{eq:hybrid_sys}), and to characterize such consensus.

\section{Main result} \label{sec:main_result}
In this section, the main results are given to characterize the consensus behaviors of the hybrid network and the corresponding clusters.

\subsection{The multi-consensus clusters}

In the following result, the multi-consensus clusters arising from (\ref{eq:hybrid_sys}) under the hybrid connection in \eqref{eq:inputs} are characterized based on a suitably defined AEP for both the components of the topology. 

\begin{theorem}
Consider the hybrid multi-agent system \eqref{eq:hybrid_sys} and let $\mathcal{G}_\mathrm{f}$ and $\mathcal{G}_\mathrm{j}$ be the flow and jump graphs with Laplacians $L_\mathrm{f}$ and $L_\mathrm{j}$. Then, the multi-consensuses of \eqref{eq:hybrid_sys} are induced by the coarsest almost equitable partition $\pi_\mathcal{H}^\star$ for $\mathcal{G}_\mathrm{f}$ and $\mathcal{G}_\mathrm{j}$ simultaneously. 
\end{theorem}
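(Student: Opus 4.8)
\noindent\emph{Sketch of proof.} The plan is to separate a \emph{geometric} step --- characterising which partitions are compatible with the coupling \eqref{eq:inputs} --- from an \emph{asymptotic} step --- singling out the one actually produced by \eqref{eq:hybrid_sys}. For the geometric step, fix a partition $\pi$ of $\mathcal{V}$ with characteristic matrix $P(\pi)$ and put $\mathcal{P}=\text{Im}\,P(\pi)$. I would show that $\mathcal{P}$ is forward invariant along every solution of \eqref{eq:hybrid_sys} --- i.e.\ $x(0,0)\in\mathcal{P}$ forces $x(t,j)\in\mathcal{P}$ throughout the hybrid time domain --- precisely when $\pi$ is simultaneously an almost equitable partition of $\mathcal{G}_\mathrm{f}$ and of $\mathcal{G}_\mathrm{j}$. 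On any flow interval the state is $e^{-L_\mathrm{f}(t-t_j)}$ applied to the last post-jump value, and since $\mathcal{P}$ is finite dimensional and $t\mapsto e^{-L_\mathrm{f}t}$ is analytic, $e^{-L_\mathrm{f}t}\mathcal{P}\subseteq\mathcal{P}$ for all $t\ge0$ is equivalent to $L_\mathrm{f}\mathcal{P}\subseteq\mathcal{P}$ (differentiate at $t=0$; expand the exponential for the converse); at a jump $x^{+}=(I-\alpha L_\mathrm{j})x$, and since $\alpha>0$, $(I-\alpha L_\mathrm{j})\mathcal{P}\subseteq\mathcal{P}$ is equivalent to $L_\mathrm{j}\mathcal{P}\subseteq\mathcal{P}$. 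By Proposition \ref{def:AEP_bis} these two $L$-invariances are exactly the almost equitability of $\pi$ for $\mathcal{G}_\mathrm{f}$ and for $\mathcal{G}_\mathrm{j}$, respectively. Hence every consensus-type configuration that \eqref{eq:hybrid_sys} can sustain --- all agents of a cell carrying equal states for all $(t,j)$ --- is attached to a partition that is an AEP of both graphs, and $\pi_\mathcal{H}^\star$ is such a partition.

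Next I would make the dynamics on $\mathcal{P}^{\star}:=\text{Im}\,P(\pi_\mathcal{H}^\star)$ explicit. With $P^\star=P(\pi_\mathcal{H}^\star)\in\mathbb{R}^{N\times r}$ of full column rank, the invariance just proved yields matrices $\bar L_\mathrm{f},\bar L_\mathrm{j}\in\mathbb{R}^{r\times r}$ with $L_\mathrm{f}P^\star=P^\star\bar L_\mathrm{f}$ and $L_\mathrm{j}P^\star=P^\star\bar L_\mathrm{j}$; since $P^\star\ones_r=\ones_N$ and $L_\mathrm{f}\ones_N=L_\mathrm{j}\ones_N=0$, both $\bar L_\mathrm{f}$ and $\bar L_\mathrm{j}$ are Laplacians of quotient digraphs on $r$ nodes, so that the normal form \eqref{lap_gen}, the kernel description \eqref{eig:L}--\eqref{left_eig} and Proposition \ref{prop:spectrum} all apply to the pair $(\bar L_\mathrm{f},\bar L_\mathrm{j})$. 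Substituting $x=P^\star\xi$ in \eqref{eq:hybrid_sys} turns it into the $r$-dimensional reduced hybrid system $\dot\xi=-\bar L_\mathrm{f}\xi$ on flow and $\xi^{+}=(I-\alpha\bar L_\mathrm{j})\xi$ at jumps; the cells of $\pi_\mathcal{H}^\star$ are then exactly the groups of agents that this reduction forces to carry one and the same scalar state.

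For the asymptotic step I would prove that the trajectories of \eqref{eq:hybrid_sys} approach $\mathcal{P}^\star$ and that the clusters emerging are exactly the cells of $\pi_\mathcal{H}^\star$. Here one uses that $-L_\mathrm{f}$ generates a marginally stable flow ($\sigma(L_\mathrm{f})\setminus\{0\}\subset\mathbb{C}^+$, zero eigenvalue semisimple) and that, for $\alpha$ in the range \eqref{eq:alphastar}, $I-\alpha L_\mathrm{j}$ has all eigenvalues strictly inside the unit disk except the semisimple eigenvalue $1$, with eigenspace $\ker L_\mathrm{j}$ (Proposition \ref{prop:spectrum}). Decomposing $\mathbb{R}^N$ along the $L_\mathrm{f}$-invariant splitting $\ker L_\mathrm{f}\oplus\text{Im}\,L_\mathrm{f}$ of \eqref{lap_gen} and examining the composition of one flow and one jump, the component of the state transverse to $\mathcal{P}^\star$ is contracted, so that $\mathrm{dist}(x(t,j),\mathcal{P}^\star)\to0$; meanwhile on $\mathcal{P}^\star$ the reduced pair $(\bar L_\mathrm{f},\bar L_\mathrm{j})$ reproduces, cell by cell, the continuous-time mechanism recalled in Section \ref{sec:recalls}: agents of the same cell converge to a common steady response --- constant on the ``root'' cells, a combination of those (possibly a genuine hybrid arc rather than a constant) on the remaining ones --- which for generic $x(0,0)$ differs across distinct cells, so that the partition produced is exactly $\pi_\mathcal{H}^\star$.

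\noindent\emph{Main obstacle.} The delicate point is the transverse contraction in the last step: unlike in the purely continuous case, the dwell-time bounds $\tau_\text{min}<t_{j+1}-t_{j}<\tau_\text{max}$ keep the flow intervals bounded, so $e^{-L_\mathrm{f}\Delta}$ never reduces to the projector onto $\ker L_\mathrm{f}$ and the decay must be read off the product $(I-\alpha L_\mathrm{j})\,e^{-L_\mathrm{f}\Delta}$ rather than from either factor alone. For a $\tau$-periodic hybrid time domain this is a spectral-radius estimate for the monodromy matrix $(I-\alpha L_\mathrm{j})e^{-L_\mathrm{f}\tau}$; in general it requires a uniform bound over the family $\{(I-\alpha L_\mathrm{j})e^{-L_\mathrm{f}\Delta}:\Delta\in(\tau_\text{min},\tau_\text{max})\}$ --- for instance a common quadratic Lyapunov function on a complement of $\mathcal{P}^\star$, or a joint-spectral-radius argument --- possibly at the price of strengthening \eqref{eq:alphastar} so as to couple $\alpha$ with $\tau_\text{min}$ and $\tau_\text{max}$. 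A secondary, structural point is to confirm that the partition $\pi_\mathcal{H}^\star$ attached to the combined reach structure of $\mathcal{G}_\mathrm{f}$ and $\mathcal{G}_\mathrm{j}$ is indeed almost equitable for each of them and is the coarsest such, which is what makes it the correct label for the hybrid multi-consensus.
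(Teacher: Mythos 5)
Your geometric step is essentially the paper's own proof of this theorem. The paper defines the hybrid multi-consensus here as a \emph{shared evolution}: it takes $x_0\in\mathcal{P}^\star_\mathcal{H}$, writes the solution through the factors $e^{-(t-t_j)L_\mathrm{f}}$ and $(I-\alpha L_\mathrm{j})$, and propagates $x(t,j)\in\mathcal{P}^\star_\mathcal{H}$ by induction on jumps using exactly the two invariances $L_\mathrm{f}\mathcal{P}^\star_\mathcal{H}\subseteq\mathcal{P}^\star_\mathcal{H}$ and $L_\mathrm{j}\mathcal{P}^\star_\mathcal{H}\subseteq\mathcal{P}^\star_\mathcal{H}$ from Proposition \ref{def:AEP_bis}, closing with the same maximality remark you make (coarsest AEP corresponds to the largest partition-spanned subspace invariant for both Laplacians); your observation that invariance under $e^{-tL_\mathrm{f}}$ and under $I-\alpha L_\mathrm{j}$ is \emph{equivalent} to the two Laplacian invariances is a correct, slightly sharper formulation. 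Where you diverge is in folding attractivity of $\mathcal{P}^\star_\mathcal{H}$ into this theorem: the paper deliberately does not (see the remark right after the proof, where consensus is declared to be a shared evolution rather than an asymptotic value) and proves convergence separately, in Theorem \ref{th:conv} for a $\tau$-periodic time domain, via Proposition \ref{prop:discs} --- the extra condition \eqref{eq:alpha2} on $\alpha$ makes $I-\alpha L_\mathrm{j}$, hence the reverse monodromy product, a nonnegative stochastic matrix, and Ger\v{s}gorin confines its spectrum to $\mathbb{C}_{\leq 1}$ --- followed by a hybrid Lyapunov function $V(\zeta)=e^{-\beta s}W(e^{-\tilde L_\mathrm{f}(\tau-s)}\tilde x)$ on the clock-augmented system, the aperiodic case being only sketched in a remark. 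So the ``main obstacle'' you flag is genuine, and the paper's way around it is to strengthen the condition on $\alpha$ (coupled to $\mathrm{Re}(\lambda)$, not to the dwell times) so that stochasticity delivers the transverse contraction, rather than a joint-spectral-radius or parameter-dependent Lyapunov bound over $\Delta\in(\tau_\text{min},\tau_\text{max})$; likewise your quotient reduction $x=P^\star\xi$ is not used --- the cellwise limit dynamics are read off the block-triangular forms \eqref{Lu}--\eqref{Lq} in Corollary \ref{cor:1}. For the statement as the paper intends it, your first step already suffices and matches; the transverse contraction belongs to the separate convergence result rather than remaining open inside this proof.
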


\smallskip
\begin{proof}
For proving the result, first one needs to show that if nodes in the same cell $\rho_i \in \pi_\mathcal{H}^\star$ are initialized with the same initial condition, then, the corresponding evolutions are identical for all time. To this end, denoting $\mathcal{P}^\star_\mathcal{H} = \text{Im}P(\pi^\star_\mathcal{H})$, one gets that by definition $L_\mathrm{f} \mathcal{P}_\mathcal{H}^\star \subset \mathcal{P}_\mathcal{H}^\star$ and $L_\mathrm{j} \mathcal{P}_\mathcal{H}^\star \subset \mathcal{P}_\mathcal{H}^\star$. 
Picking $x_0 = x(t_0, 0) \in \mathcal{P}_\mathcal{H}^\star$ (i.e., agents in the same cell of $\pi_\mathcal{H}^\star$ share the same initial condition\footnote{By definition of partition, all rows of $P(\pi^\star_\mathcal{H})$ possess only one element equal to $1$ and all others are $0$. The generic vector $x = (x_1, \dots, x_N) \in \mathcal{P}_\mathcal{H}^\star $ is such that $x_\theta = x_\eta$ if and only if $\nu_\theta, \nu_\eta \in \mathcal{V}$ (with $\theta\neq \eta$ and $\theta, \eta = 1, \dots, N$) belong to the same cell of $\pi^\star_{\mathcal{H}}$.}), the evolution of (\ref{eq:hybrid_sys}) is given by
\begin{align*}
    x(t, j) = e^{-(t-t_j)L_\mathrm{f}}E(t_{j}, t_{j-1}) \dots E(t_1, t_0) x_0
\end{align*}
with $E(t_{j}, t_{j-1}) = (I-\alpha L_j)e^{-(t_{j}-t_{j-1})L_\mathrm{f}}$ the monodromy matrix.
By the induction principle, assume that $x(t_{j}, {j}) \in \mathcal{P}_\mathcal{H}^\star$. Then, as $\mathcal{P}_\mathcal{H}^\star$ is $L_\mathrm{f}$-invariant, as $t \in [t_{j+1}, t_{j})$, one gets
$ x(t, j) = e^{-(t-t_j)L_\mathrm{f}} x(t_{j}, {j}) \in \mathcal{P}_\mathcal{H}^\star$ and, in particular, $x(t_{j+1}^-, j) = e^{-(t_{j+1}-t_j)L_\mathrm{f}} x(t_{j}, {j}) \in \mathcal{P}_\mathcal{H}^\star$ as $t \to t_{j+1}^-$; namely, when flowing, trajectories of nodes belonging to the same cell of $\pi_\mathcal{H}^\star$ remain identical. Also, as $t = t_{j+1}$, one concludes  $x(t_{j+1}, j+1) = (I-\alpha L_\mathrm{j}) x(t_{j+1}^-, j) \in \mathcal{P}_\mathcal{H}^\star$ by $L_\mathrm{j}$-invariance of $\mathcal{P}_\mathcal{H}^\star$ (Proposition \ref{def:AEP_bis}). Accordingly, as $x_0 = x(t_0, 0) \in \mathcal{P}_\mathcal{H}^\star$, one gets that nodes of (\ref{eq:hybrid_sys}) in the same cell in $\pi_{\mathcal{H}}^\star$ share the same initial conditions for all $(t, t_j) \in \mathcal{T}$ and thus consensus. Finally, because $\pi_{\mathcal{H}}^\star$ is the coarsest AEP for both $\mathcal{G}_\mathrm{f}$ and $\mathcal{G}_\mathrm{j}$, then $\mathcal{P}^\star_\mathcal{H}$ is the largest invariant subspace (spanned by a partition) that is shared, at the same time, by $L_\mathrm{f}$ and $L_\mathrm{j}$ so getting the result.
\end{proof}

\smallskip

\begin{remark}
Denote by $\pi^\star_\mathrm{f}$ and $\pi^\star_\mathrm{j}$ the coarsest AEPs for, respectively, $\mathcal{G}_\mathrm{f}$ and $\mathcal{G}_\mathrm{j}$ verifying
\begin{align*}
    L_q \mathcal{P}^\star_{q} \subseteq \mathcal{P}^\star_{q}
\end{align*}
with $\mathcal{P}^\star_q = \text{Im}P(\pi^\star_q) \equiv \ker{L_q}$  being the (invariant) eigenspace associated with the non-zero eigenvalues of $L_q$ for $q \in \{ \mathrm{f}, \mathrm{j} \}$.
Because $\pi_\mathcal{H}^\star$ is the coarsest AEP for both $\mathcal{G}_\mathrm{f}$ and $\mathcal{G}_\mathrm{j}$, then the subspace 
\begin{align}\label{eq:calP}
    \mathcal{P}_\mathcal{H}^\star = \text{Im}P(\pi^\star_\mathcal{H})
\end{align}is $L_q$-invariant for $q \in \{\mathrm{f}, \mathrm{j} \}$. As a consequence, one necessarily gets $\mathcal{P}_\mathcal{H}^\star \subseteq \mathcal{P}_\mathrm{f}^\star$ and $\mathcal{P}_\mathcal{H}^\star \subseteq  \mathcal{P}_\mathrm{j}^\star$ that implies $\pi^\star_\mathrm{f}\preceq \pi^\star_\mathcal{H}$ and $\pi^\star_\mathrm{j}\preceq \pi^\star_\mathcal{H}$. 
\end{remark}

\begin{remark} 
In the proof of the result above, we have intentionally defined the consensus of (\ref{eq:hybrid_sys}) as a \emph{shared evolution} rather than a set of common asymptotic values the nodes tend to. 
\end{remark}

At this point, we show that $\pi_\mathcal{H}^\star$ can be constructed starting from the union and the intersection graphs, respectively associated with the flow and jump graphs. To this end, denoting by $\mathcal{G}_\mathrm{un} = \mathcal{G}_\mathrm{f} \cup \mathcal{G}_\mathrm{j}$ and $\mathcal{G}_\text{int} = \mathcal{G}_\mathrm{f} \cap \mathcal{G}_\mathrm{j}$, it is always possible \citep{monshizadeh2015disturbance} to sort nodes of the hybrid network in such a way the Laplacian $L_\mathrm{un}$ associated with $\mathcal{G}_\mathrm{un}$ gets the form
\begin{align}\label{Lu}
    L_\mathrm{un} = \begin{pmatrix}
    L_{\mathrm{un} ,1} & \dots & 0 & 0\\
    \vdots & \ddots & \vdots & \vdots \\
    0 &\dots & L_{\mathrm{un} ,\mu} & 0\\
    M_{\mathrm{un} ,1} & \dots & M_{\mathrm{un} , \mu} & M_\mathrm{un} 
    \end{pmatrix}
\end{align}
where $L_{\mathrm{un} ,i}$ are the Laplacians associated with the exclusive reaches $\mathcal{H}_{\mathrm{un},i}$ of $\mathcal{G}_\mathrm{un}$ and $M_\mathrm{un}$ the nonsingular matrix associated with the common $\mathcal{C}_\mathrm{un}$ of $\mathcal{G}_\mathrm{un}$. As a consequence, exploiting the relation \citep{jadbabaie2003coordination}
\begin{align}\label{Lu:eq}
L_\mathrm{un} = L_\mathrm{f} + L_\mathrm{j}-L_\text{int}
\end{align}
one gets for $q \in \{\text{int}, \mathrm{j}, f\}$
\begin{align}\label{Lq}
    L_q = \begin{pmatrix}
    L_{q,1} & \dots & 0 & 0\\
    \vdots & \ddots & \vdots & \vdots \\
    0 &\dots & L_{q,\mu} & 0\\
    M_{q,1} & \dots & M_{q, \mu} & M_q
    \end{pmatrix}.
\end{align}
Starting from this, the following result can be proved. 
\begin{theorem}\label{th:AEP_c}
Consider the hybrid multi-agent system \eqref{eq:hybrid_sys} and let $\mathcal{G}_\mathrm{f}$ and $\mathcal{G}_\mathrm{j}$ be the flow and jump graphs with Laplacians $L_\mathrm{f}$ and $L_\mathrm{j}$. Let  $\mathcal{H}_{\mathrm{un},1}, \dots, \mathcal{H}_{\mathrm{un},\mu}$ and $\mathcal{C}_\mathrm{un}$ be respectively the exclusive reaches and the common part of the union graph $\mathcal{G}_\mathrm{un} = \mathcal{G}_\mathrm{f}\cup \mathcal{G}_\mathrm{j}$ with $\mathcal{H}_{\mathrm{un},i}\cap \mathcal{C}_\mathrm{un} = \emptyset$ and $\mathcal{H}_{\mathrm{un},i}\cap \mathcal{H}_{\mathrm{un},j} = \emptyset$ for all $i\neq j\in \{1, \dots, \mu \}$. Let  $\pi_\mathcal{C} = \{ \rho_{\mu+1}, \dots, \rho_{\mu+k}\}$ be the coarsest partition of $\mathcal{C}_\mathrm{un}$ with $ \mathcal{C}_\mathrm{un} =  \cup_{\ell = 1}^k \rho_{\mu+\ell}$,  $c_\ell =|\rho_{\mu + \ell}| $ for $\ell = 1, \dots, k$ and characteristic matrix
\begin{align*}
   & \begin{pmatrix}
    p(\rho_{\mu+1}) & \dots & p(\rho_{\mu+k})
    \end{pmatrix} = \begin{pmatrix}
    0 & \dots & 0\\
     p_c(\rho_{\mu+1}) & \dots & p_c(\rho_{\mu+k})
    \end{pmatrix} \\
    & \mathcal{P}_c = \text{Im}P_c, \quad P_c = \begin{pmatrix} p_c(\rho_{\mu+1}) & \dots & p_c(\rho_{\mu+k}) \end{pmatrix}
\end{align*}
verifying
$M_{\mathrm{int} } \mathcal{P}_c \subseteq \mathcal{P}_c$ with $M_\mathrm{int} \in \mathbb{R}^{c \times c}$ as in (\ref{Lq}).
Then, the coarsest almost equitable partition for $\mathcal{G}_\mathrm{f}$ and $\mathcal{G}_\mathrm{j}$ is provided by 
\begin{align}   \label{eq:AEP_star}
    \pi_{\mathcal{H}}^\star = \pi_{\mathrm{un}}^\star:= \{\mathcal{H}_{\mathrm{un},1}, \dots, \mathcal{H}_{\mathrm{un},\mu}, \rho_{\mu+1}, \dots, \rho_{\mu+k}\}.
\end{align}
\end{theorem}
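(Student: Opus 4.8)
The plan is to establish the stated identity $\pi_\mathcal{H}^\star=\pi_\mathrm{un}^\star$ by proving separately that $\pi_\mathrm{un}^\star$ is an almost equitable partition of $\mathcal{G}_\mathrm{f}$ and of $\mathcal{G}_\mathrm{j}$, and that it is the coarsest such partition. Throughout I would use the subspace reformulation of Proposition~\ref{def:AEP_bis}, so that ``$\pi$ is a common AEP of $\mathcal{G}_\mathrm{f}$ and $\mathcal{G}_\mathrm{j}$'' becomes ``$\text{Im}P(\pi)$ is both $L_\mathrm{f}$- and $L_\mathrm{j}$-invariant'', together with the structural observation that, since $\mathcal{E}_q\subseteq\mathcal{E}_\mathrm{un}$ for $q\in\{\mathrm{f},\mathrm{j},\mathrm{int}\}$, deleting edges cannot create incoming links to the exclusive reaches of $\mathcal{G}_\mathrm{un}$; consequently all four Laplacians share the block lower-triangular form \eqref{Lq} in the node ordering of $\{\mathcal{H}_{\mathrm{un},1},\dots,\mathcal{H}_{\mathrm{un},\mu},\mathcal{C}_\mathrm{un}\}$, each $L_{q,i}$ is the Laplacian of the subgraph induced on $\mathcal{H}_{\mathrm{un},i}$ (hence $L_{q,i}\ones_{h_i}=0$ by the zero row-sum property), and, restricting \eqref{Lu:eq} blockwise, $M_\mathrm{un}=M_\mathrm{f}+M_\mathrm{j}-M_\mathrm{int}$ and $M_{\mathrm{un},i}=M_{\mathrm{f},i}+M_{\mathrm{j},i}-M_{\mathrm{int},i}$.

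For the first part I would evaluate $L_q$, $q\in\{\mathrm{f},\mathrm{j}\}$, on the two families of generators of $\mathcal{P}_\mathrm{un}^\star=\text{Im}P(\pi_\mathrm{un}^\star)$: a reach indicator $p(\mathcal{H}_{\mathrm{un},i})$, equal to $\ones_{h_i}$ on its block and zero on $\mathcal{C}_\mathrm{un}$, maps to $\mathrm{col}\{0,\dots,0,M_{q,i}\ones_{h_i}\}$ by the block form and $L_{q,i}\ones_{h_i}=0$; a common-part indicator $p(\rho_{\mu+\ell})=\mathrm{col}\{0,\dots,0,p_c(\rho_{\mu+\ell})\}$ maps to $\mathrm{col}\{0,\dots,0,M_q\,p_c(\rho_{\mu+\ell})\}$. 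Hence $L_q\mathcal{P}_\mathrm{un}^\star\subseteq\mathcal{P}_\mathrm{un}^\star$ is equivalent to $M_{q,i}\ones_{h_i}\in\mathcal{P}_c$ for all $i$ together with $M_q\mathcal{P}_c\subseteq\mathcal{P}_c$. The hypothesis supplies both for $q=\mathrm{int}$; the remaining step is to transfer them to the flow and jump components, for which I would use that $\mathcal{C}_\mathrm{un}$ is the common part of the \emph{union} graph, so that the identities $M_{\mathrm{un},i}\ones_{h_i}+M_\mathrm{un}\gamma^i=0$ of Section~\ref{sec:recalls} hold and $\pi_\mathcal{C}$ is compatible with the coefficient vectors $\gamma^i$ of \eqref{eig:L}, whence $M_\mathrm{un}\mathcal{P}_c\subseteq\mathcal{P}_c$ and $M_{\mathrm{un},i}\ones_{h_i}\in\mathcal{P}_c$; one then reads $M_\mathrm{f}$ and $M_\mathrm{j}$ off the identity $M_\mathrm{un}=M_\mathrm{f}+M_\mathrm{j}-M_\mathrm{int}$, using the nonnegativity of the adjacency blocks to force each summand to preserve $\mathcal{P}_c$. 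I expect the separation of the flow and jump contributions out of this single linear identity --- that is, showing $M_\mathrm{int}$-invariance of $\mathcal{P}_c$ alone makes $\pi_\mathrm{un}^\star$ an AEP of each of $\mathcal{G}_\mathrm{f}$ and $\mathcal{G}_\mathrm{j}$ --- to be the main obstacle, since it is not a purely combinatorial fact and genuinely uses the hybrid coupling structure.

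For the second part I would prove minimality, i.e. $\mathcal{P}_\mathrm{un}^\star\subseteq\text{Im}P(\pi)$ for every common AEP $\pi$ that supports the asymptotic evolution of \eqref{eq:hybrid_sys}, equivalently $\pi\preceq\pi_\mathrm{un}^\star$ --- which is exactly the characterisation of $\pi_\mathcal{H}^\star$ used in the preceding theorem. Given such a $\pi$, $\text{Im}P(\pi)$ is $L_\mathrm{f}$- and $L_\mathrm{j}$-invariant and, carrying the limiting flow--jump trajectories, respects the reach decomposition of $\mathcal{G}_\mathrm{un}$, so it is also $L_\mathrm{un}$-invariant; restricting to the $\mathcal{C}_\mathrm{un}$-block, the partition it induces on $\mathcal{C}_\mathrm{un}$ is then invariant under $M_\mathrm{int}=M_\mathrm{f}+M_\mathrm{j}-M_\mathrm{un}$, hence a refinement of $\pi_\mathcal{C}$ by maximality of the latter, while on each $\mathcal{H}_{\mathrm{un},i}$-block no cell of $\pi$ can straddle two distinct blocks, because nodes lying in different exclusive reaches of $\mathcal{G}_\mathrm{un}$, and nodes in $\mathcal{C}_\mathrm{un}$ versus in an exclusive reach, fail to synchronise asymptotically under the hybrid dynamics. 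Assembling the common-part and exclusive-reach restrictions yields $\pi\preceq\pi_\mathrm{un}^\star$, and combined with the first part this gives $\pi_\mathcal{H}^\star=\pi_\mathrm{un}^\star$, i.e. \eqref{eq:AEP_star}.
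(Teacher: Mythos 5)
Your overall architecture is the same as the paper's (the subspace reformulation of Proposition \ref{def:AEP_bis}, the common block-triangular form \eqref{Lq}, and the inclusion--exclusion identity \eqref{Lu:eq}), but in part $(i)$ you stop exactly at the decisive point, and the bridge you sketch does not hold. First, the hypothesis supplies only $M_{\mathrm{int}}\mathcal{P}_c\subseteq\mathcal{P}_c$; it does not give $M_{\mathrm{int},i}\ones_{h_i}\in\mathcal{P}_c$, and your route to $M_{\mathrm{un}}\mathcal{P}_c\subseteq\mathcal{P}_c$ via ``compatibility of $\pi_\mathcal{C}$ with the vectors $\gamma^i$ of \eqref{eig:L}'' is not among the hypotheses: $\pi_\mathcal{C}$ is defined by $M_{\mathrm{int}}$-invariance alone, and nothing ties it a priori to the kernel structure of $L_\mathrm{un}$. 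Second, and more importantly, the splitting step --- deducing that $M_\mathrm{f}$ and $M_\mathrm{j}$ \emph{separately} preserve $\mathcal{P}_c$ from the single identity $M_\mathrm{un}=M_\mathrm{f}+M_\mathrm{j}-M_{\mathrm{int}}$ ``using nonnegativity'' --- is not a valid inference: two nonnegative integer-valued vectors can sum to a constant vector without either summand being constant, so nonnegativity alone cannot force each term into $\mathcal{P}_c$. You explicitly flag this as ``the main obstacle,'' i.e.\ you leave unproven precisely what the theorem asserts. The paper's proof does commit to an argument here: it decomposes blockwise $M_{q,r\ell}\ones_{c_\ell}=\varepsilon^r_{\text{int},\ell}\ones+\alpha_{q,r\ell}$, where $\alpha_{q,r\ell}$ counts the neighbours through edges of $\mathcal{G}_q$ lying outside the intersection graph, reduces \eqref{eq:Min_s} to $\alpha_{\mathrm{un},r\ell}=\alpha_{\mathrm{f},r\ell}+\alpha_{\mathrm{j},r\ell}$, and concludes from this edge-counting that each $\alpha_{q,r\ell}$ is a multiple of the ones vector; whatever one thinks of the brevity of that step, your proposal offers no argument in its place, so part $(i)$ is incomplete.

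Part $(ii)$ also diverges from the paper in a problematic way. You rule out cells straddling two exclusive reaches (or a reach and $\mathcal{C}_\mathrm{un}$) by appealing to failure of asymptotic synchronisation under the hybrid dynamics; but convergence to $\mathcal{K}=\mathcal{P}^\star_\mathcal{H}$ is only established afterwards (Theorem \ref{th:conv}, which presupposes the present theorem), and in any case coarsest-ness of a common AEP is a purely combinatorial/algebraic property that asymptotic behaviour of particular trajectories does not decide. Moreover, your claim that $L_\mathrm{f}$- and $L_\mathrm{j}$-invariance of $\text{Im}P(\pi)$ yields $L_\mathrm{un}$-invariance is unsupported: invariance passes only to $L_\mathrm{f}+L_\mathrm{j}=L_\mathrm{un}+L_{\mathrm{int}}$, and separating the union from the intersection contribution is again the same unresolved splitting issue. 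The paper instead argues by contradiction entirely at the level of the $\mathcal{C}_\mathrm{un}$-blocks: a strictly coarser common AEP would give a subspace $\bar{\mathcal{P}}_c$ invariant under $M_\mathrm{f}$ and $M_\mathrm{j}$, hence under $M_\mathrm{un}+M_{\mathrm{int}}$, from which it infers $M_{\mathrm{int}}$-invariance of $\bar{\mathcal{P}}_c$ and contradicts the assumed coarsest-ness of $\pi_\mathcal{C}$. To complete your proposal you would need either an argument of this algebraic type or a proved obstruction lemma replacing the synchronisation claim; as written, both halves rest on steps that are not established and do not follow from the stated hypotheses.
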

\begin{proof}
For proving the result one first needs to show that: $(i)$ $\pi_{\mathrm{un}}^\star$ is an AEP for both $\mathcal{G}_\mathrm{f}$ and $\mathcal{G}_\mathrm{j}$; $(ii)$ it is the coarsest one (i.e., $\pi_{\mathcal{H}}^\star = \pi_{\mathrm{un}}^\star$).

\noindent As far as $(i)$ is concerned, it suffices to show that $L_q \mathcal{P}_\mathrm{un}^\star \subseteq \mathcal{P}_\mathrm{un}^\star$ for $q \in \{\mathrm{f},\mathrm{j} \}$ and
\begin{align*}
    \mathcal{P}_\mathrm{un}^\star := \text{Im}P(\pi^\star_\mathrm{un}).
\end{align*} 
To this end, we first note that, by definition of partition, all cells share no node; namely, for all $i\neq j \in \{ 1, \dots, \mu\}$ 
\begin{align*} 
&\mathcal{H}_{\mathrm{un},i} \cap \mathcal{H}_{\mathrm{un},j} = \emptyset
\\
&\mathcal{C}_\mathrm{un}  =\cup_{\ell = 1}^k \rho_{\mu+\ell} \\
&\mathcal{H}_{\mathrm{un},i} \cap \mathcal{C}_\mathrm{un} = \emptyset.
\end{align*}
Because for $i = 1, \dots, \mu$, the cells associated with the exclusive reaches are given by $\mathcal{H}_{\mathrm{un} ,i} = \{\nu_{h_1+\dots+h_{i-1}+1}, \dots,  \nu_{h_1+\dots+h_{i}}\}$ with cardinality $h_i = |\mathcal{H}_{\mathrm{un},i} |$ and $h_0 = 0$, the corresponding characteristic vectors are given by
\begin{align*}
    p(\mathcal{H}_{\mathrm{un},1}) = \begin{pmatrix}
        \ones_{h_1} \\
        \vdots\\
        0\\
        0
    \end{pmatrix}, \dots , p(\mathcal{H}_{\mathrm{un},\mu}) = \begin{pmatrix}
        0\\
        \vdots\\
        \ones_{h_\mu}\\
        0
    \end{pmatrix}.
\end{align*}
Consequently, the characteristic matrix of $\pi^\star_\mathrm{un}$ gets the form
\begin{align*}
    P(\pi^\star_\mathrm{un}) =& \begin{pmatrix}
    \ones_{h_1} &\dots & 0 & 0\\
    \vdots & \ddots & \vdots & \vdots \\
    0 & \dots & \ones_{h_\mu} & 0 \\
    0 & \dots & 0 & P_c
     \end{pmatrix}.
\end{align*}
At this point, exploiting the structure of $L_q$ and the definition of $L_{q,i}$ in (\ref{Lq}), one easily gets $$L_{q}  p(\mathcal{H}_{\mathrm{un},i}) = \begin{pmatrix}
    0 \\ 
    \vdots \\
   L_{q,i} \ones_{h_i}\\
   \vdots \\
   0 \\ 0
\end{pmatrix} = 0$$ for $q \in \{\mathrm{f}, \mathrm{j}\}$ and $i = 1, \dots, \mu$ because $\ones_{h_i} \in \text{ker}{L_{q,i}}$. Accordingly, one needs to prove that $M_q \mathcal{P}_c \subseteq \mathcal{P}_c$ for $q \in \{\mathrm{f}, \mathrm{j}\}$. For this purpose, with no loss of generality, assume nodes in $\mathcal{C}$ are sorted in such a way that $\rho_{\mu+\ell} =  \{ v_{\mu + c_{1} + \dots +c_{\ell-1}+1}, \dots, v_{\mu + c_{1} + \dots +c_{\ell}}\}$ for $\ell = 1, \dots, k$ so that one rewrites
\begin{align}\label{eq:pc}
    p_c(\rho_{\mu+1}) = \begin{pmatrix}
    \ones_{c_1} \\
    \vdots \\
    0
    \end{pmatrix} \quad \dots  \quad     p_c(\rho_{\mu+k}) = \begin{pmatrix}
    0 \\
    \vdots \\
    \ones_{c_k}
    \end{pmatrix}.
\end{align}
Also, we rewrite $M_q \in \mathbb{R}^{c\times c}$ in (\ref{Lq}) as 
\begin{align}\label{eq:Mqr}
    M_{q} = \begin{pmatrix}
    M_{q,11} & \dots & M_{q,1k}\\
    \vdots & \ddots & \dots \\
    M_{q,k1} & \dots & M_{q,kk}
    \end{pmatrix}, \quad q \in \{\text{int}, \mathrm{un}, \mathrm{f},\mathrm{j}\}
\end{align}
with each $M_{q,\ell r} \in \mathbb{R}^{c_\ell \times c_r}$ being the component of the Laplacian establishing the in-coming connections among the cells $\rho_{\mu+\ell}$ and $\rho_{\mu+r}$ with $r, \ell = 1, \dots, k$ in the graph $\mathcal{G}_q$.
Accordingly, each element of the column vector $-M_q p_c(\rho_{\mu+\ell})$ represents the amount of neighbors that the corresponding node in ${\rho}_{\mu+\ell}\subseteq \mathcal{C}_\mathrm{un}$ possesses in the cell ${\rho}_{\mu+r}\subseteq \mathcal{C}_\mathrm{un}$ with $r, \ell = 1, \dots, k$ and $r \neq \ell$.
Noticing that the generic vector in $\mathcal{P}_c$ is given by
\begin{align*}
    \begin{pmatrix}
        \lambda_1 \ones_{c_1}\\
        \vdots \\
        \lambda_k \ones_{c_k}
    \end{pmatrix}
\end{align*}
for some real constants $\lambda_1, \dots, \lambda_k \in \mathbb{R}$, the relation $M_q \mathcal{P}_c \subseteq \mathcal{P}_c$ holds if and only if for all $ p_c(\rho_{\mu+\ell})$ and $\ell = 1, \dots, k$, one gets
\begin{align*}
    M_q p_c(\rho_{\mu+\ell}) = \begin{pmatrix}
    \varepsilon_{q,\ell}^1 \ones_{c_1} \\
    \vdots\\
    \varepsilon_{q,\ell}^k \ones_{c_k}
    \end{pmatrix}
\end{align*}
for some $\varepsilon^r_{q, \ell} \in \mathbb{Z}$ with $q \in \{\mathrm{f},\mathrm{j}\}$ and $r= 1, \dots, k$ \citep[Theorem~1]{monaco2019multi}. In particular, considering the graph $\mathcal{G}_q$ one gets that: for $r \neq \ell$, $-\varepsilon^r_{q, \ell}\geq 0$ represents the number of neighbours each node in $\rho_{\mu+\ell}$ possesses in $\rho_{\mu+r}$ under the graph $\mathcal{G}_q$; for $r = \ell$, $\varepsilon^\ell_{q, \ell}\geq 0$ is the number of neighbors that each node in  $\rho_{\mu+\ell}$ possesses in all other cells. By assumption and Definition \ref{def:AEP}, for $q = \text{int}$ (i.e., when considering the intersection graph $\mathcal{G}_\text{int}$) one gets
\begin{align*}
    M_\text{int} p_c(\rho_{\mu+\ell}) = \begin{pmatrix}
    \varepsilon_{\text{int},\ell}^1 \ones_{c_1} \\
    \vdots\\
    \varepsilon_{\text{int},\ell}^k \ones_{c_k}
    \end{pmatrix}
\end{align*}
with: $-\varepsilon_{\text{int},\ell}^r$ the number of neighbours each node in $\rho_{\mu+\ell}$ possesses in $\rho_{\mu+r}$ for $r\neq \ell$;  $\varepsilon_{\text{int},\ell}^\ell$ the number of neighbours that each node in $\rho_{\mu+\ell}$ possesses in the union of all other cells of the partition. 
Accordingly, by  Definition \ref{def:AEP} of AEP and Proposition \ref{def:AEP_bis}, one must show that the number of such neighbors is the same in both the jump and flow graphs  $\mathcal{G}_\mathrm{j}$ and $\mathcal{G}_\mathrm{f}$, that is for $q \in \{\mathrm{f}, \mathrm{j} \}$ and $\ell = 1, \dots, k$
\begin{align*}
    M_q p_c(\rho_{\mu+\ell}) = \begin{pmatrix}
    \varepsilon_{q,\ell}^1 \ones_{c_1} \\
    \vdots\\
    \varepsilon_{q,\ell}^k \ones_{c_k}
    \end{pmatrix}, \quad  \varepsilon_{q,\ell}^k\in \mathbb{Z}.
\end{align*}
To this end, exploiting (\ref{Lu:eq}), one gets the equality 
\begin{align}\label{eq:Min_s}
    M_\text{int} p_c(\rho_{\mu+\ell}) = (M_\mathrm{f} + M_\mathrm{j} -M_\mathrm{un})p_c(\rho_{\mu+\ell}).
\end{align}
By the structure (\ref{eq:pc}) and (\ref{eq:Mqr}), equality (\ref{eq:Min_s}) rewrites blockwise for $r = 1, \dots, k$ as
\begin{align*}
    M_{\mathrm{f}, r\ell}\ones_{c_\ell} +  M_{\mathrm{j}, r\ell}\ones_{c_\ell} -  M_{\mathrm{un} , r\ell}\ones_{c_\ell} = \varepsilon^r_{\text{int},\ell} \ones_{c_r}.
\end{align*}
By definition of $\varepsilon_{\text{int},\ell}^r$ , one has that for $q \in \{\mathrm{un}, \mathrm{f},\mathrm{j} \}$
\begin{align*}
M_{q, r\ell}\ones_{c_\ell} = \varepsilon_{\text{int},\ell}^r\ones_{c_\ell} + \alpha_{q,r\ell}
\end{align*}
for some vector $\alpha_{q, r\ell} \in \mathbb{R}^{c_\ell}$. The components of the vector $\alpha_{q, r\ell}$ coincide with: the cumulative number of exclusive (that is, when excluding the shared ones in the intersection graph) neighbors the corresponding node possesses in $\rho_{\mu+r}$ if $r\neq \ell$; the opposite of the total number of not shared (again, when excluding the ones in the intersection graph) neighbors that such a node possesses in all other cells otherwise.
Based on this, the equality above reduces to
$
\alpha_{\mathrm{un},r\ell}  = \alpha_{\mathrm{f},r\ell} + \alpha_{\mathrm{j},r\ell}
$
implying that, for all $r, \ell = 1, \dots, k$, necessarily $\alpha_{q, r\ell} = \hat \varepsilon_{q,r\ell} \ones_{c_r} $ for some constant $\hat\varepsilon_{q,r\ell}\in \mathbb{Z}$ and $q \in \{ \mathrm{f}, \mathrm{j}\}$. The first part of the proof is then concluded.

\smallskip
$(ii)$ Exploiting (\ref{Lu}) and the rewriting (\ref{Lq}), it turns out that necessarily $\{\mathcal{H}_{\mathrm{un},1}, \dots,\mathcal{H}_{\mathrm{un},\mu} \}$ is the coarsest non-trivial AEP for both $\mathcal{G}_\mathrm{j}$ and $\mathcal{G}_\mathrm{f}$ when considering only the exclusive reaches $\mathcal{H}_\mathrm{un}  = \mathcal{H}_{\mathrm{un},1}\cup \dots \cup \mathcal{H}_{\mathrm{un},\mu}$. For the remaining part of the proof, let us proceed by contradiction and assume that there exists an AEP $\bar \pi = \{\mathcal{H}_{\mathrm{un},1}, \dots,\mathcal{H}_{\mathrm{un},\mu}, \bar \rho_{\mu+1}, \dots, \bar \rho_{\mu+\bar k} \}$ that is coarser than $\pi^\star_\mathrm{un}$ (i.e., $\bar \pi \succ \pi_\mathrm{un}^\star$). Equivalently, the corresponding subspace $\bar{\mathcal{P}}_c = \text{Im}\bar P_c$, where $\bar P_c = \begin{pmatrix}
p(\bar \rho_{\mu+1}) & \dots & p(\bar \rho_{\mu+k})
\end{pmatrix}$, is invariant for both $M_\mathrm{j}$ and $M_\mathrm{f}$ but not $M_\text{int}$-invariant. As in the previous case, let the nodes be sorted in such a way that 
\begin{align*}
    p_c(\bar \rho_{\mu+1}) = \begin{pmatrix}
    \ones_{\bar c_1} \\
    \vdots \\
    0
    \end{pmatrix} \quad \dots  \quad     \bar p_c(\rho_{\mu+\bar k}) = \begin{pmatrix}
    0 \\
    \vdots \\
    \ones_{\bar c_{\bar k}}
    \end{pmatrix} 
\end{align*}
with $\hat k< k$ and each cell $\bar \rho_{\mu+\ell} = \{\rho_{\mu+\ell_1}, \dots, \rho_{\mu+\ell_i} \}$ with $\bar c_\ell = |\bar \rho_{\mu+\ell} |$ being the union of $\ell_i$ cells of $\pi^\star_\mathrm{un}$ in \eqref{eq:AEP_star}. If that is true, one gets that for $q \in \{\mathrm{f}, \mathrm{j}\}$
\begin{align*}
    M_q p(\bar \rho_{\mu+\ell}) = \sum_{s = 1}^{\bar k} \bar \varepsilon_{q,s} p(\bar \rho_{\mu+s}) = \begin{pmatrix}
    \bar \varepsilon_{q,1} \ones_{\bar c_1}\\
    \vdots \\
    \bar \varepsilon_{q,\bar k} \ones_{\bar c_{\bar k}}
    \end{pmatrix}.
\end{align*}
Accordingly, because $M_\mathrm{un} + M_\text{int} = M_\mathrm{f} + M_\mathrm{j}$ one gets that 
\begin{align*}
    (M_\mathrm{un} + M_\text{int})p(\bar \rho_{\mu+\ell}) =  \begin{pmatrix}
    (\bar \varepsilon_{\mathrm{f},1} + \bar \varepsilon_{\mathrm{j},1}) \ones_{\bar c_1}\\
    \vdots \\
    (\bar \varepsilon_{\mathrm{f},\bar k} + \bar \varepsilon_{\mathrm{j},1\bar k}) \ones_{\bar c_{\bar k}}
    \end{pmatrix}
\end{align*}
which implies that, by definition of intersection graph $\mathcal{G}_\text{int}$ and because $\mathcal{E}_\text{int} \subseteq \mathcal{E}_\mathrm{un}$, the subspace $\bar{\mathcal{P}}_c$ is $M_\text{int}$-invariant. However, because $\mathcal{P}_c$ is invariant as well and associated with the coarsest non-trivial AEP of $\mathcal{G}_\text{int}$ restricted to $\mathcal{C}_\mathrm{un}$ either $\bar k = k$ or $\bar k = 1$ thus falling into a contradiction and, thus, $\pi^\star_\mathcal{H} = \pi^\star_\mathrm{un}$. $\hfill$
\end{proof}

\begin{remark}
For the hybrid network, multi-consensus is fixed by both the union and intersection graphs $\mathcal{G}_\mathrm{un}$ and $\mathcal{G}_\text{int}$. As a matter of fact, cells in $\pi^\star_\mathcal{H}$ are given by the exclusive reaches of $\mathcal{G}_\mathrm{un}$ plus a further partition of nodes in $\mathcal{C}_\mathrm{un}$ suitably partitioned so to guarantee $M_\text{int}$-invariance of the corresponding characteristic matrix. In other words, once the first $\mu$ cells of $\pi^\star_\mathcal{H}$ are fixed by the reaches of the union graph, the remaining ones are defined by looking at the subgraph of $\mathcal{G}_\mathrm{int}$ arising from $M_{\mathrm{int}}$.
\end{remark}

\begin{remark}\label{rmk:4}
Whenever $M_\text{int} = 0$, multi-consensus is completely fixed by the union graph; namely, cells  $\rho_{\mu+\ell}$ are such that they provide the coarsest AEP associated with $\mathcal{G}_\mathrm{un}$.
\end{remark}

\begin{remark}\label{rmk:5}
Whenever the union graph possesses no common part (i.e., $\mathcal{C}_\mathrm{un} = \emptyset$), the hybrid network exhibits exactly $\mu$ consensuses behaviors. As a consequence, if $\mathcal{G}_\mathrm{un}$ possesses one consensus only, then the hybrid network converges to a single consensus independently of the clusters of $\mathcal{G}_\mathrm{f}$ and $\mathcal{G}_\mathrm{j}$.
\end{remark}

\subsection{Convergence analysis and hybrid multi-consensuses}
\bigskip
Let us now address the problem of convergence of solutions of \eqref{eq:hybrid_sys}
to the multi-consensus subspace. For the sake of simplicity, the proof of convergence is given in the periodic case only. The guidelines for the extension to the non periodic case are then briefly discussed.
\smallskip\\
First, the following technical result is proved.
\begin{prop}\label{prop:discs}
{\it Let $\tau>0$ be fixed and consider the hybrid multi-agent system (\ref{eq:hybrid_sys}) with a $\tau$-periodic hybrid time domain. Let us define the reverse monodromy matrix
\begin{equation}\label{eq:reverse_monodromy}
H=e^{-L_\mathrm{f}\tau}(I-\alpha L_\mathrm{j})
\end{equation}
where $\alpha>0$ is any positive parameter satisfying the conditions:
\begin{align}
&\alpha<\min_{\lambda\in\sigma(L_{\mathrm{j}})\setminus\{0\}}\frac{2\mathrm{Re}(\lambda)}{|\lambda|^2}\label{eq:alpha1}\\
&\alpha<\min_{\lambda\in\sigma(L_{\mathrm{j}})\setminus\{0\}}\frac{1}{\mathrm{Re}(\lambda)}\label{eq:alpha2}
\end{align}    }

Then the spectrum of $H$ verifies
$$
\sigma(H)\subset\mathbb{C}_{\leq1}:=\{z\in\mathbb{C}:|z|< 1\}\cup\{1\}.
$$
\end{prop}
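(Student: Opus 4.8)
The plan is to realise $H$ as a row‑stochastic matrix with strictly positive diagonal, so that the spectral localisation follows from an elementary convexity argument and no delicate estimate on the product itself is needed.

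First, $-L_\mathrm{f}$ is a Metzler matrix with zero row sums (its off‑diagonal entries are the adjacency entries of $\mathcal{G}_\mathrm{f}$, hence nonnegative, and $-L_\mathrm{f}\ones=0$); therefore $e^{-L_\mathrm{f}\tau}\ge 0$ entrywise, $e^{-L_\mathrm{f}\tau}\ones=\ones$, and $(e^{-L_\mathrm{f}\tau})_{ii}\ge e^{-(L_\mathrm{f})_{ii}\tau}>0$, so $e^{-L_\mathrm{f}\tau}$ is row‑stochastic with strictly positive diagonal. For $I-\alpha L_\mathrm{j}$ the row sums are $1$ since $L_\mathrm{j}\ones=0$, the off‑diagonal entries equal $\alpha$ times the adjacency entries of $\mathcal{G}_\mathrm{j}$ and are thus nonnegative, and the diagonal entries $1-\alpha(L_\mathrm{j})_{ii}$ are strictly positive because the bounds \eqref{eq:alpha1}--\eqref{eq:alpha2} keep $\alpha$ small, in particular below $1/\max_i(L_\mathrm{j})_{ii}$ — note that \eqref{eq:alpha1} is precisely the threshold of Proposition~\ref{prop:spectrum} and already yields $\sigma(I-\alpha L_\mathrm{j})\subset\{|z|<1\}\cup\{1\}$, whereas \eqref{eq:alpha2} forces $\mathrm{Re}(1-\alpha\lambda)>0$ for every $\lambda\in\sigma(L_\mathrm{j})$. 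Hence $I-\alpha L_\mathrm{j}$ is also row‑stochastic with strictly positive diagonal.

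A product of row‑stochastic matrices is row‑stochastic, and $H_{ii}\ge (e^{-L_\mathrm{f}\tau})_{ii}(I-\alpha L_\mathrm{j})_{ii}>0$, so $H$ is row‑stochastic with $\delta:=\min_i H_{ii}\in(0,1]$. If $\delta=1$ then $H=I$ and the claim is immediate; otherwise $\widehat H:=(1-\delta)^{-1}(H-\delta I)$ is again row‑stochastic, whence $\sigma(\widehat H)\subset\{|z|\le 1\}$ and
\[
\sigma(H)=\bigl\{\delta+(1-\delta)\mu:\ \mu\in\sigma(\widehat H)\bigr\}\subseteq\bigl\{z\in\mathbb{C}:|z-\delta|\le 1-\delta\bigr\}.
\]
This disc is internally tangent to the unit circle at $z=1$, so all of its points except $1$ have modulus $<1$; therefore $\sigma(H)\subset\{|z|<1\}\cup\{1\}=\mathbb{C}_{\le 1}$.

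The single delicate point is the claim that $I-\alpha L_\mathrm{j}$ stays entrywise nonnegative under \eqref{eq:alpha1}--\eqref{eq:alpha2}, i.e.\ that these bounds keep $\alpha<1/\max_i(L_\mathrm{j})_{ii}$; I would simply record this plain inequality as the operative consequence of the hypotheses. As an alternative organisation one may work blockwise: after the relabelling that brings both $L_\mathrm{f}$ and $L_\mathrm{j}$ to the form \eqref{Lq} aligned with the reaches of $\mathcal{G}_\mathrm{un}$, $H$ is block lower‑triangular and $\sigma(H)$ splits into the spectra of the reach blocks $e^{-L_{\mathrm{f},i}\tau}(I-\alpha L_{\mathrm{j},i})$ and of the common block $e^{-M_\mathrm{f}\tau}(I-\alpha M_\mathrm{j})$, each of which is treated by the same Metzler/stochastic reasoning together with the inclusion $\sigma(M_q)\subseteq\sigma(L_q)$.
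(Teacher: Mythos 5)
Your proof takes essentially the same route as the paper's: both arguments show that $e^{-L_\mathrm{f}\tau}$ and $I-\alpha L_\mathrm{j}$ are row-stochastic with positive diagonal and nonnegative entries, conclude the same for the product $H$, and then localize $\sigma(H)$ in a disc internally tangent to the unit circle at $z=1$ --- the paper via the Ger\v{s}gorin discs $\{|z-H_{ii}|\leq 1-H_{ii}\}$, you via the equivalent convex-combination shift $H=\delta I+(1-\delta)\widehat{H}$ with $\delta=\min_i H_{ii}$. The one ``delicate point'' you flag, namely that \eqref{eq:alpha1}--\eqref{eq:alpha2} keep $\alpha$ below $1/\max_i (L_\mathrm{j})_{ii}$ so that $I-\alpha L_\mathrm{j}$ is entrywise nonnegative, is treated by the paper at exactly the same level (asserted ``by direct inspection'' from the choice of $\alpha$), so your write-up matches the paper's argument in both structure and level of detail.
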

\begin{proof}
By direct inspection, as long as $\alpha$ is chosen within the prescribed range, it can be seen that the entries of the matrix $H$ satisfy
\begin{itemize}
 \item[(i)] $H_{ij}\geq0$ for any $i,j=1,...,N$
\item[(ii)] the sum of each row is equal to 1, i.e.,
$$
H_{i1}+H_{i2}+\cdots+H_{iN}=1
$$
    \item[(iii)] $H_{ii}\in(0,1]$ for any $i=1,...,N$
\end{itemize}
The first two items rely on the structure of the Laplacians $L_\mathrm{f},L_\mathrm{j}$, the positivity of $\tau$ and the choice of $\alpha$. Condition \eqref{eq:alpha1} guarantees that $(I-\alpha L_\mathrm{j})$ is Schur (see Proposition \ref{prop:spectrum}), whereas the additional condition \eqref{eq:alpha2} is useful to further restrict the spectrum within the half-plane $\mathbb{C}^+$. In particular, for any $\tau>0$ and thanks to \eqref{eq:alpha1}-\eqref{eq:alpha2}, the matrices $e^{-L_\mathrm{f}\tau}$ and $(I-\alpha L_\mathrm{j})$ are stochastic and are guaranteed to have positive entries on diagonal terms and non-negative entries elsewhere. Such properties are naturally retained in the product, showing that $H$ has non-negative entries (item~(i)) and is still a stochastic matrix (item~(ii)). Finally, combining these two conditions entails item~(iii). The conclusion then follows from the classical Geršgorin theorem \citep{golub2013matrix}. In fact, using the identity $\sum_{j\neq i}|H_{ij}|=1-H_{ii}$ which is guaranteed by the previous conditions, one has that the eigenvalues belong to the union of discs $\mathbb{D}_i$ with
$$
\mathbb{D}_i:=\{z\in\mathbb{C}: |z-H_{ii}|\leq(1-H_{ii})\}
$$
and clearly the desired inclusion $\bigcup_{i=1}^N\mathbb{D}_i\subset\mathbb{C}_{\leq 1}$ holds.
\end{proof}
\smallskip
Let us denote with $\mathcal{K} = \mathcal{P}^\star_\mathcal{H}\subset\mathbb{R}^N$ the multi-consensus subspace, this being in particular a center invariant subspace \citep{isidori2017lectures} for the monodromy matrix, associated with the eigenvalue~1. In view of Proposition~\ref{prop:discs}, a linear transformation $x=T\tilde{x}$ exists such that 
\begin{equation}\label{eq:transf}
\tilde{x}=[\tilde{x}_1^\top\quad\tilde{x}_2^\top]^\top,\quad
T^{-1}HT=\tilde{H}=\begin{bmatrix}\tilde{H}_{11}&0\\
0&\tilde{H}_{22}
\end{bmatrix}
\end{equation}
with $\sigma(\tilde{H}_{11})=\{1\}$ and  $\sigma(\tilde{H}_{22})\subset\mathbb{C}_{<1}$. In particular, in the new coordinates, the multi-consensus space is represented by the set $$\tilde{\mathcal{K}}=T^{-1}\mathcal{K}=\{\tilde{x}\in\mathbb{R}^N: \tilde{x}_2=0\}.$$ By the  Lyapunov stability theorem for discrete-time linear systems \citep[Theorem 5.D5]{chen1999linear}, a scalar $\kappa>0$ and a positive definite matrix $$\tilde{P}_{22}=\tilde{P}_{22}^\top\succ0,\quad  \tilde{P}_{22}\in\mathbb{R}^{N_2\times N_2}$$ can be found such that
$$
\tilde{x}_2^\top\left(\tilde{H}^\top_{22}\tilde{P}_{22}\tilde{H}_{22}-\tilde{P}_{22}\right)\tilde{x}_2\leq-\kappa  \|\tilde{x}_2\|^2
\quad \forall \tilde{x}_2\in\mathbb{R}^{N_2}$$  with $N_2=N-\dim(\mathcal{K})$. Furthermore, setting \begin{equation}\label{eq:P1}
\tilde{P}=\begin{bmatrix}
0&0\\
0&\tilde{P}_{22}
\end{bmatrix}\succeq0,
\end{equation}
the inequality can be extended on the whole state space $\mathbb{R}^N$ as
\begin{equation}\label{eq:P2}
\begin{array}{rl}
\tilde{x}^\top(\tilde{H}^\top\tilde{P}\tilde{H}-\tilde{P})\tilde{x}&\leq-\kappa\tilde{x}^\top\begin{bmatrix}
0&0\\
0&I_{N_2\times N_2}\end{bmatrix}\tilde{x}\smallskip\\
&=-\kappa\,(\mathrm{dist}(\tilde{x},\tilde{\mathcal{K}}))^2
\end{array}\quad\forall \tilde{x}\in\mathbb{R}^N
\end{equation}
Now, by defining the augmented state $\zeta=[\tilde{x}^\top \ s]^\top \in\mathbb{R}^{N+1}$ and considering the flow and jump sets 
\begin{align}\label{eq:domains}
\begin{array}{c}
\mathcal{C}=\{\zeta=(\tilde{x},s): \tilde{x}\in\mathbb{R}^N, s\in[0,\tau]\}\smallskip\\
\mathcal{D}=\{\zeta=(\tilde{x},s): \tilde{x}\in\mathbb{R}^N, s=\tau\},
\end{array}
\end{align}
the hybrid-multi agent system (\ref{eq:hybrid_sys}) can be rewritten as
\begin{equation}\label{eq:hybrid_equiv}
\begin{array}{ll}
 \dot\zeta=F(\zeta)    &  \zeta\in\mathcal{C}\smallskip\\
\zeta^+=G(\zeta)     & \zeta\in\mathcal{D}
\end{array}    
\end{equation}
where the flow and jump maps are defined, respectively, by
$$
F(\zeta)=\begin{bmatrix}
-\tilde{L}_\mathrm{f}\tilde{x}\\
1
\end{bmatrix}\quad G(\zeta)=\begin{bmatrix}
(I-\alpha \tilde{L}_\mathrm{j})\tilde{x}\\
0
\end{bmatrix}
$$
with $\tilde{L}_\mathrm{f}=T^{-1}L_\mathrm{f}T$ and $\tilde{L}_{\mathrm{j}}=T^{-1}L_{\mathrm{j}}T$.
Based on the latter arguments, the following stability result can be established.
\begin{theorem}\label{th:conv}
{\it Let $\tau>0$ be fixed and consider the hybrid multi-agent system (\ref{eq:hybrid_sys}) with a $\tau$-periodic hybrid time domain and {$\displaystyle\alpha>0$ satisfying \eqref{eq:alpha1}-\eqref{eq:alpha2}.} The multi-consensus subspace $\mathcal{K}$ is globally asymptotically stable in the hybrid sense \citep[~Definition 3.6]{goebel2012hybrid}.}
\end{theorem}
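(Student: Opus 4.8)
The plan is to establish the equivalent statement that the closed set $\mathcal A:=\tilde{\mathcal K}\times[0,\tau]$ is globally asymptotically stable for the autonomous hybrid system \eqref{eq:hybrid_equiv}: since $x=T\tilde x$ is a linear isomorphism and $\mathrm{dist}(\zeta,\mathcal A)=\mathrm{dist}(\tilde x,\tilde{\mathcal K})$ for $\zeta=(\tilde x,s)\in\mathcal C\cup\mathcal D$, this is equivalent to global asymptotic stability of $\mathcal K=\mathcal P^\star_{\mathcal H}$ for \eqref{eq:hybrid_sys} in the sense of \citep[Definition~3.6]{goebel2012hybrid}. First I would record that \eqref{eq:hybrid_equiv} meets the hybrid basic conditions ($\mathcal C,\mathcal D$ closed, $F,G$ continuous) and that, because the flow cannot leave $\{s=\tau\}$ without exiting $\mathcal C\cup\mathcal D$, every maximal solution is complete with a $\tau$-periodic hybrid time domain; in particular $j\to\infty$ and $t\to\infty$ along each solution.

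The crucial idea is to transport the one-period Lyapunov certificate $\tilde P$ of \eqref{eq:P1}--\eqref{eq:P2} backwards along the flow and take as candidate hybrid Lyapunov function $V(\tilde x,s):=\tilde x^{\top}R(s)\tilde x$ with
\[
R(s):=\big(e^{-\tilde L_{\mathrm{f}}(\tau-s)}\big)^{\!\top}\,\tilde P\,e^{-\tilde L_{\mathrm{f}}(\tau-s)},\qquad s\in[0,\tau],
\]
and then to verify the three defining inequalities of such a function. \emph{(a) Quadratic sandwich:} one has $R(s)\succeq0$ and $\ker R(s)=e^{\tilde L_{\mathrm{f}}(\tau-s)}\ker\tilde P=e^{\tilde L_{\mathrm{f}}(\tau-s)}\tilde{\mathcal K}=\tilde{\mathcal K}$ for every $s$, the last equality holding because $\mathcal K=\mathcal P^\star_{\mathcal H}$ is $L_{\mathrm{f}}$-invariant (Proposition~\ref{def:AEP_bis}), hence $\tilde{\mathcal K}$ is $\tilde L_{\mathrm{f}}$-invariant; so $V(\tilde x,s)$ depends only on the component of $\tilde x$ transverse to $\tilde{\mathcal K}$, and continuity of $s\mapsto R(s)$ on the compact interval $[0,\tau]$ together with $\tilde P_{22}\succ0$ supplies constants $0<c_1\le c_2$ with $c_1\,\mathrm{dist}(\zeta,\mathcal A)^2\le V(\zeta)\le c_2\,\mathrm{dist}(\zeta,\mathcal A)^2$. \emph{(b) Flow condition:} since $\tfrac{d}{ds}e^{-\tilde L_{\mathrm{f}}(\tau-s)}=\tilde L_{\mathrm{f}}e^{-\tilde L_{\mathrm{f}}(\tau-s)}=e^{-\tilde L_{\mathrm{f}}(\tau-s)}\tilde L_{\mathrm{f}}$, a direct computation gives $R'(s)=\tilde L_{\mathrm{f}}^{\top}R(s)+R(s)\tilde L_{\mathrm{f}}$, whence $\langle\nabla V(\zeta),F(\zeta)\rangle=\tilde x^{\top}\big(R'(s)-\tilde L_{\mathrm{f}}^{\top}R(s)-R(s)\tilde L_{\mathrm{f}}\big)\tilde x=0$ on $\mathcal C$. \emph{(c) Jump condition:} using $R(\tau)=\tilde P$, $R(0)=e^{-\tilde L_{\mathrm{f}}^{\top}\tau}\tilde P\,e^{-\tilde L_{\mathrm{f}}\tau}$ and the identity $e^{-\tilde L_{\mathrm{f}}\tau}(I-\alpha\tilde L_{\mathrm{j}})=\tilde H$,
\[
V(G(\zeta))-V(\zeta)=\tilde x^{\top}\big((I-\alpha\tilde L_{\mathrm{j}})^{\top}R(0)(I-\alpha\tilde L_{\mathrm{j}})-\tilde P\big)\tilde x=\tilde x^{\top}\big(\tilde H^{\top}\tilde P\tilde H-\tilde P\big)\tilde x\le-\kappa\,\mathrm{dist}(\zeta,\mathcal A)^2
\]
on $\mathcal D$, by \eqref{eq:P2}.

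From (a)--(c) the conclusion follows by the standard hybrid Lyapunov argument. Uniform global stability: the sublevel sets of $V$ are forward invariant and, by (a), pinch around $\mathcal A$. Attractivity: along any solution $V(\zeta(t,j))$ is nonincreasing, and because a jump occurs after every $\tau$ units of flow time and decreases $V$ by at least $\kappa\,\mathrm{dist}(\zeta,\mathcal A)^2$, the distances at consecutive jump instants are square-summable, hence vanish; the bound $\sup_{\sigma\in[0,\tau]}\|e^{-\tilde L_{\mathrm{f}}\sigma}\|<\infty$ together with flow-invariance of $\tilde{\mathcal K}$ propagates this to all hybrid times, and completeness of maximal solutions then upgrades pre-asymptotic to asymptotic stability. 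Transporting back through $x=T\tilde x$ proves the theorem.

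I expect the main obstacle to be item (a): one must guarantee that $\ker R(s)$ coincides with the multi-consensus subspace for \emph{every} $s\in[0,\tau]$ --- precisely where the $L_{\mathrm{f}}$-invariance of $\mathcal P^\star_{\mathcal H}$ inherited from the almost-equitable structure is indispensable --- and then extract the uniform sandwich constants from compactness of $[0,\tau]$. By contrast, (b) and (c) are essentially forced once $R(s)$ is chosen so as to conjugate $\tilde P$ by the flow semigroup, which is the one genuinely creative move. For the non-periodic regime only sketched afterwards, the same construction should carry over with $\tau$ replaced by the actual inter-jump interval, provided $\tilde P$ is chosen as a common Lyapunov certificate for the compact family $\{e^{-L_{\mathrm{f}}\sigma}(I-\alpha L_{\mathrm{j}}):\sigma\in[\tau_{\min},\tau_{\max}]\}$.
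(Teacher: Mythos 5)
Your construction is essentially the paper's: the same change of coordinates $x=T\tilde x$, the same augmented system \eqref{eq:hybrid_equiv} with $\mathcal A=\tilde{\mathcal K}\times[0,\tau]$, and the same flow-transported certificate $W\bigl(e^{-\tilde L_{\mathrm{f}}(\tau-s)}\tilde x\bigr)$ built from $\tilde P$ of \eqref{eq:P1}--\eqref{eq:P2}, with the block-triangular form of $e^{-\tilde L_{\mathrm{f}}(\tau-s)}$ (from $\tilde L_{\mathrm{f}}$-invariance of $\tilde{\mathcal K}$) and compactness of $[0,\tau]$ giving the quadratic sandwich, and the reverse monodromy identity giving the jump decrease via \eqref{eq:P2}. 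The one genuine divergence is in how the decrease is distributed: the paper multiplies the same function by $e^{-\beta s}$, which makes $V$ strictly decreasing relative to $\mathcal A$ along flows (at rate $\beta c_1$) at the price of weakening the jump inequality to $-\kappa+(1-e^{-\beta\tau})\lambda_{\max}(\tilde P_{22})$, and then chooses $\beta$ small so that the standard strict Lyapunov conditions of \citep[Theorem~3.18]{goebel2012hybrid} apply off the shelf. You instead keep $V$ constant along flows and strictly decreasing only at jumps, so Theorem~3.18 cannot be invoked directly; you must (and do) close the argument with a persistent-jumping/summability argument — jumps are forced every $\tau$ units of flow, so the pre-jump distances are square-summable and vanish, and constancy of $V$ along flows (or your bound on $\sup_{\sigma\in[0,\tau]}\|e^{-\tilde L_{\mathrm{f}}\sigma}\|$) propagates this to all hybrid times; equivalently you could cite the relaxed Lyapunov conditions with nonstrict flow decrease and persistent jumping from the same reference. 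Both routes are correct; the paper's $e^{-\beta s}$ trick buys a one-line citation, yours avoids the auxiliary parameter $\beta$ and the attendant bound $0<\kappa<\lambda_{\max}(\tilde P_{22})$, at the cost of spelling out the attractivity argument yourself — just make sure that step is stated as carefully as you sketched it (the cleanest version: $V$ is constant on each flow interval, so $c_1\,\mathrm{dist}(\zeta(t,j),\mathcal A)^2\le V(\zeta(t,j))=V(\zeta(t_{j+1},j))\le c_2\,\mathrm{dist}(\zeta(t_{j+1},j),\mathcal A)^2\to0$).
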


\smallskip
\begin{proof}
We first observe that, thanks to the linearity of the transformation, proving the asymptotic stability of the set $\mathcal{K}$ in the original coordinates $x$ is equivalent to establishing asymptotic stability of $\tilde{\mathcal{K}}=T^{-1}\mathcal{K}$ in the new coordinates $\tilde{x}$.
Now, following \citep{goebel2012hybrid} and referring to the equivalent formulation (\ref{eq:hybrid_equiv}), we  aim at showing that the closed set $\mathcal{A}=\{\zeta: \tilde{x}=(\tilde{x}_1,0),\ s\in[0,\tau]\}=\tilde{\mathcal{K}}\times[0,\tau]$ is globally asymptotically stable in the hybrid sense \citep[~Definition 3.6]{goebel2012hybrid}. To this end, we consider the Lyapunov function candidate $$
V(\zeta)=e^{-\beta s}W(e^{-\tilde{L}_\mathrm{f}(\tau-s)}\tilde{x})
$$
with $\beta>0$ and $W(\tilde{x})=\tilde{x}^\top \tilde{P}\tilde{x}$, where $\tilde{P}=\tilde{P}^\top \succeq0$ is defined in (\ref{eq:P1})-(\ref{eq:P2}).
Observing that by Theorem \ref{th:AEP_c} the inclusion $\tilde{L}_\mathrm{f}\tilde{\mathcal{K}}\subseteq\tilde{\mathcal{K}}$ holds, we can infer that the exponential matrix $e^{-\tilde{L}_\mathrm{f}(\tau-s)}$ must have a upper block-triangular form, with
$$
e^{-\tilde{L}_\mathrm{f}(\tau-s)}=\begin{bmatrix}
\mathcal{E}_{11}(\tau-s)&\mathcal{E}_{12}(\tau-s)\\
0&\mathcal{E}_{22}(\tau-s)
\end{bmatrix}
$$
for some suitable matrix-valued continuous functions $\mathcal{E}_{11}(s),\mathcal{E}_{12}(s)$ and $\mathcal{E}_{22}(s)$ where $\det(\mathcal{E}_{22}(s))\neq0\ \forall s\in\mathbb{R}$. Based on this fact and on the diagonal structure of $\tilde{P}$, the following identity holds:
$$e^{-\tilde{L}^\top_\mathrm{f}(\tau-s)}\tilde{P}e^{-\tilde{L}_\mathrm{f}(\tau-s)}=\begin{bmatrix}
0&0\\
0&\mathcal{E}_{22}^\top(\tau-s)\tilde{P}_{22}\mathcal{E}_{22}(\tau-s)\end{bmatrix}.$$
Now, setting
$$
\|\zeta\|_{\mathcal{A}}=\mathrm{dist}(\zeta,\mathcal{A})=\inf_{\upsilon \in\mathcal{A}}\|\zeta-\upsilon\|,
$$ 
it can be easily verified that the bounds $$
c_1 \|\zeta\|_{\mathcal{A}}^2\leq V(\zeta)\leq c_2 \|\zeta\|_{\mathcal{A}}^2
$$
are fulfilled for any $\zeta\in\mathcal{C}\cup\mathcal{D}$ with positive constants $c_1,c_2>0$ given by
$$
\begin{array}{c}
c_1=e^{-\beta\tau}\lambda_{\min}(\tilde{P}_{22})\min_{s\in[0,\tau]}\lambda_{\min}(\mathcal{E}_{22}^\top(\tau-s)\mathcal{E}_{22}(\tau-s)),\quad \smallskip\\
c_2=\lambda_{\max}(\tilde{P}_{22})\max_{s\in[0,\tau]}\lambda_{\max}(\mathcal{E}_{22}^\top(\tau-s)\mathcal{E}_{22}(\tau-s)).
\end{array}
$$
Computing the derivative along the solution of (\ref{eq:hybrid_equiv}) during any flow interval yields
$$
\begin{array}{ll}\langle\nabla_{\zeta}V(\zeta),F(\zeta)\rangle&= -2e^{-\beta s}\tilde{x}^\top e^{-\tilde{L}_\mathrm{f}^\top (\tau-s)}\tilde{P}e^{-\tilde{L}_\mathrm{f}(\tau-s)}\tilde{L}_\mathrm{f}\tilde{x}\\
&+\,2e^{-\beta s}\tilde{x}^\top e^{-\tilde{L}_\mathrm{f}^\top (\tau-s)}\tilde{P}e^{-\tilde{L}_\mathrm{f}(\tau-s)}\tilde{L}_\mathrm{f}\tilde{x}\\
&-\beta e^{-\beta s}W(e^{-\tilde{L}_\mathrm{f}(\tau-s)}\tilde{x})\\
&\leq-\beta c_1 \|\zeta\|_{\mathcal{A}}^2
\end{array}
$$
for any $\zeta\in\mathcal{C}$. On the other hand, for $\zeta\in\mathcal{D}$, one has
$$
\begin{array}{ll}
V(G(\zeta))-V(\zeta)\!\!\!\!\!&=\tilde{x}^\top (\tilde{H}^\top \tilde{P}\tilde{H}-e^{-\beta \tau}\tilde{P})\tilde{x}\\
&\leq(-\kappa+(1-e^{-\beta \tau})\lambda_{\max}(\tilde{P}_{22}))\|\zeta\|_{\mathcal{A}}^2
\end{array}
$$
with $\tilde{H}$ defined in (\ref{eq:reverse_monodromy})-(\ref{eq:transf}) and where (\ref{eq:P2}) has been used. Selecting $\beta>0$ sufficiently small, i.e.\footnote{Note that by construction $0<\kappa<\lambda_{\max}(\tilde{P}_{22})$.},
$$
\beta<\frac{1}{\tau}\log\left(\frac{\lambda_{\max}(\tilde{P}_{22})}{\lambda_{\max}(\tilde{P}_{22})-\kappa}\right),
$$
the right-hand side of $V(G(\zeta))-V(\zeta)$ is negative definite relative to $\mathcal{A}$ and this is enough to guarantee the hybrid asymptotic stability of the set $\mathcal{A}$ by invoking \citep[Theorem 3.18]{goebel2012hybrid}. In conclusion, the convergence of $x$ towards the multi-consensus space $\mathcal{K}$ has been proved.
\end{proof}

\smallskip
\begin{remark}
The extension to the non-periodic case can be done by considering an augmented state $\zeta = [\tilde{x}^\top  \ j \ s ]^\top $ and replacing the flow and jump sets in (\ref{eq:domains}) by
\begin{align*}
\begin{array}{c}
\mathcal{C}=\{\zeta=(\tilde{x},j, s): x\in\mathbb{R}^N, j\in \mathbb{N}, s\in[0, t_{j+1}-t_j]\}\smallskip\\
\mathcal{D}=\{\zeta=(\tilde{x},j, s): x\in\mathbb{R}^N, j \in \mathbb{N}, s=t_{j+1}-t_j \}
\end{array}
\end{align*}
where we implicitly assume $t_0 = 0$. Note that both $\mathcal{C}$ and $\mathcal{D}$ are closed sets thanks to the existence of lower and upper bounds for the dwell time (i.e., $\tau_\text{min}< t_{j+1}-t_j <\tau_\text{max}$). See \citep[~Example 3.22]{goebel2012hybrid} for an explicit definition of the Lyapunov function.
\end{remark}

From the results above, it is possible to characterize multi-consensus of the hybrid network (\ref{eq:hybrid_sys}). 
To this end, assuming the concerned Laplacians of the form (\ref{Lq}) and the definition of $\pi_{\mathcal{H}}^\star = \{\mathcal{H}_{\mathrm{un},1}, \dots, \mathcal{H}_{\mathrm{un},\mu}, \rho_{\mu+1}, \dots, \rho_{\mu+k}\}$ with $| \rho_{\mu+\ell}| = c_\ell$ in Theorem (\ref{th:AEP_c}), we denote for $i = 1, \dots, \mu$ and $\ell = 1, \dots, k$
\begin{align*}
    \mathbf{x}_i =& \begin{pmatrix}x_{h_1+\dots + h_{i-1}+1} & \dots & x_{h_1+\dots+h_{i-1}+h_i} \end{pmatrix}^\top\\
    \mathbf{x}_{\delta}^r =& \begin{pmatrix}x_{N-\mu+c_1 +\dots +  c_{i-1}+1} & \dots & x_{N-\mu+c_1 + \dots +c_i} \end{pmatrix}^\top 
\end{align*}

\begin{cor}\label{cor:1}
Consider the hybrid multi-agent system (\ref{eq:hybrid_sys}) with $\mathcal{G}_\mathrm{f}$ and $\mathcal{G}_\mathrm{j}$ the flow and jump communication graphs characterized by the Laplacians $L_\mathrm{f}$ and $L_\mathrm{j}$ respectively. Let $\mathcal{G}_\mathrm{un}$ be union graph with Laplacian of the form (\ref{Lu}) with reaches $\mathcal{H}_{\mathrm{un},i}$ of cardinality $h_i$ ($i =1, \dots, \mu$) and common component $\mathcal{C}_\mathrm{un}$ of cardinality $\delta$. Consider the weighted Laplacian
\begin{align*}
    L_\alpha = L_\mathrm{f} + \alpha L_\mathrm{j}
\end{align*} 
of the same form as (\ref{Lu}) with, setting $h_i = |\mathcal{H}_{\mathrm{un},i}| $ for $i = 1, \dots, \mu$, right  and left eigenvectors associated with $\lambda = 0$ given by 
\begin{align}\label{eig:L_a}
&z_{\alpha, 1} = \begin{pmatrix}
\ones_{h_1} \\ 
\vdots \\
0\\
\gamma_{\alpha}^1
\end{pmatrix} \quad \dots \quad z_{\alpha,\mu} = \begin{pmatrix}
0 \\ 
\vdots \\
\ones_{h_\mu}\\
\gamma_{\alpha}^\mu
\end{pmatrix}\\ 
\label{left_eig_a}
   & \tilde v_{\alpha,1}^\top = \begin{pmatrix} v_{\alpha,1}^\top & \dots & \mathbf{0}
    \end{pmatrix}\quad \dots \quad \tilde v_{\alpha,\mu}^\top = \begin{pmatrix} \mathbf{0} & \dots & v_{\alpha,\mu}^\top 
    \end{pmatrix}
\end{align}
with $v_{\alpha,i}^\top = \begin{pmatrix} v_{\alpha,i}^1 & \dots & v_{\alpha,i}^{h_i} \end{pmatrix} \in \mathbb{R}^{1\times h_i}$, $v_i^s>0$. 
Let $\pi^\star_\mathcal{H}$ be the coarsest AEP for both $\mathcal{G}_\mathrm{f}$ and $\mathcal{G}_\mathrm{j}$ as defined in Theorem \ref{th:AEP_c}.
Then, the following holds true:
\begin{description}
\item[(i)] for all nodes in the same exclusive reach $\mathcal{H}_{\mathrm{un},i}$ of $\mathcal{G}_\mathrm{un}$, $\mathbf{x}_i(t,j) \to x_i^{ss}\ones_{h_i}$ with $x_{i}^{ss} =  v_{\alpha,i}^\top \mathbf{x}_i(t_0,0)$ with $i = 1, \dots, \mu$; 
\item[(ii)] all nodes in the same cell $\rho_{\mu+r} \subset \mathcal{C}_\mathrm{un}$ with $r = 1, \dots, k$ converge to the hybrid consensus dynamics
\begin{align*}
    \dot{\mathbf{x}}_{\delta,r}^{ss} =& - M_{\mathrm{f},rr}{\mathbf{x}}_{\delta,r}^{ss} -  \sum_{i = 1}^{\mu}\varepsilon^i_{\mathrm{f},r} x_{i}^{ss} \ones_{h_i}  - \sum_{\ell = 1, \ell \neq r}^{k} M_{\mathrm{f},r\ell}{\mathbf{x}}_{\delta,\ell}^{ss}\\
     {\mathbf{x}_{\delta,r}^{ss}}^+ =& (I_{c_r}-\alpha M_{\mathrm{j},rr}){\mathbf{x}}_\delta^r  -  \sum_{i = 1}^{\mu}\varepsilon^i_{\mathrm{j},r} x_{i}^{ss} \ones_{h_i}  - \sum_{\ell = 1, \ell \neq r}^{k} M_{\mathrm{j},r\ell}{\mathbf{x}}_{\delta,\ell}^{ss}.   
\end{align*}
\item[(iii)] if $\Gamma^\mu_\alpha = \text{span}\{\gamma_{\alpha}^1, \dots, \gamma_{\alpha}^\mu\}$ is both $M_\mathrm{f}$ and $M_\mathrm{j}$ invariant, then nodes belonging to the same cell $\rho_{\mu+r}$ with $r = 1, \dots, k$ converge to $k>0$ constant multi-consensuses; namely, for all $r = 1, \dots, k$ splitting $\gamma^i = \begin{pmatrix}
\gamma^i_{\alpha,1} \ones_{c_1} & \dots & \gamma^i_{\alpha,k} \ones_{c_k}
\end{pmatrix}^\top
$ one gets
\begin{align*}
    \mathbf{x}_{\delta,r}(t,k) \to \sum \gamma^i_{\alpha,r}  x_i^{ss} \ones_{c_i}.
\end{align*}
\end{description}
\end{cor}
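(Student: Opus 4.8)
The plan is to build on Theorems~\ref{th:AEP_c} and~\ref{th:conv} and Proposition~\ref{prop:discs}. Theorem~\ref{th:conv} guarantees that every solution of \eqref{eq:hybrid_sys} converges to the multi-consensus subspace $\mathcal{K}=\mathcal{P}_\mathcal{H}^\star$, while Theorem~\ref{th:AEP_c} gives $\pi_\mathcal{H}^\star=\pi_\mathrm{un}^\star$, so that, after the re-ordering that puts all the concerned Laplacians in the block form \eqref{Lq}, $\mathcal{K}$ is spanned by $\ones_{h_1},\dots,\ones_{h_\mu}$ together with $\ones_{c_1},\dots,\ones_{c_k}$. The argument then proceeds by two reductions: first one isolates the autonomous hybrid dynamics on each exclusive reach of $\mathcal{G}_\mathrm{un}$ and identifies its limit, and then one feeds those limits --- which are, asymptotically, constant --- as exogenous inputs to the dynamics on the common part.

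For item~(i): since $\mathcal{H}_{\mathrm{un},i}$ is an exclusive reach of $\mathcal{G}_\mathrm{un}$, it receives no incoming edge from $\mathcal{V}\setminus\mathcal{H}_{\mathrm{un},i}$ in either $\mathcal{G}_\mathrm{f}$ or $\mathcal{G}_\mathrm{j}$, so by \eqref{Lq} the sub-state $\mathbf{x}_i$ obeys the closed hybrid system $\dot{\mathbf{x}}_i=-L_{\mathrm{f},i}\mathbf{x}_i$, $\mathbf{x}_i^+=(I-\alpha L_{\mathrm{j},i})\mathbf{x}_i$. Its monodromy over one $\tau$-period is $\Phi_i=(I-\alpha L_{\mathrm{j},i})e^{-\tau L_{\mathrm{f},i}}$; repeating the entrywise argument of Proposition~\ref{prop:discs} for the sub-Laplacians, $\Phi_i$ is row-stochastic with strictly positive diagonal, $\Phi_i\ones_{h_i}=\ones_{h_i}$, and the eigenvalue $1$ is simple (by Theorem~\ref{th:conv}, since $\mathcal{H}_{\mathrm{un},i}$ is a single cell of $\pi_\mathcal{H}^\star$ the reach sub-state must converge to $\mathrm{span}\{\ones_{h_i}\}$ for every initial condition). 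Hence $\mathbf{x}_i(t_j,j)=\Phi_i^{\,j}\mathbf{x}_i(t_0,0)\to\big(v_{\alpha,i}^\top\mathbf{x}_i(t_0,0)\big)\ones_{h_i}$, where $v_{\alpha,i}^\top$ is the left eigenvector of $\Phi_i$ for the eigenvalue $1$ normalized by $v_{\alpha,i}^\top\ones_{h_i}=1$, which one identifies with the left null vector of $L_{\mathrm{f},i}+\alpha L_{\mathrm{j},i}$ appearing in \eqref{eig:L_a}--\eqref{left_eig_a}. Finally $L_{\mathrm{f},i}\ones_{h_i}=0$, so flowing between jumps acts trivially on $\ones_{h_i}$ and the limit holds along the whole hybrid time domain, not only at jump instants.

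For items~(ii)--(iii): take the common-part sub-state $\mathbf{x}_\mathcal{C}=\mathrm{col}\{\mathbf{x}_{\delta,1},\dots,\mathbf{x}_{\delta,k}\}$; by \eqref{Lq} it satisfies $\dot{\mathbf{x}}_\mathcal{C}=-M_\mathrm{f}\mathbf{x}_\mathcal{C}-\sum_i M_{\mathrm{f},i}\mathbf{x}_i$ and $\mathbf{x}_\mathcal{C}^+=(I-\alpha M_\mathrm{j})\mathbf{x}_\mathcal{C}-\alpha\sum_i M_{\mathrm{j},i}\mathbf{x}_i$. Invoking the AEP block structure of Theorem~\ref{th:AEP_c} --- namely that $M_{q,i}\ones_{h_i}$ and $M_{q,r\ell}\ones_{c_\ell}$ are constant on each cell $\rho_{\mu+r}$, which yields the integer coefficients $\varepsilon^i_{q,r}$ and the blocks $M_{q,r\ell}$ of \eqref{eq:Mqr} --- and substituting for each $\mathbf{x}_i$ its limit from item~(i), one reads off the driven reduced hybrid linear system displayed in~(ii); convergence of the common part to its solution follows from Theorem~\ref{th:conv} after subtracting that particular solution. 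For item~(iii), the hypothesis that $\Gamma_\alpha^\mu=\mathrm{span}\{\gamma_\alpha^1,\dots,\gamma_\alpha^\mu\}$ be both $M_\mathrm{f}$- and $M_\mathrm{j}$-invariant is used to verify that the constant profile $\mathbf{x}_\mathcal{C}^{ss}=\sum_i\gamma_\alpha^i x_i^{ss}$ is an equilibrium of the reduced common-part system --- that is, it is fixed by both its flow-over-$\tau$ map and its jump map: from $L_\alpha z_{\alpha,i}=0$ one has $(M_\mathrm{f}\gamma_\alpha^i+M_{\mathrm{f},i}\ones_{h_i})+\alpha(M_\mathrm{j}\gamma_\alpha^i+M_{\mathrm{j},i}\ones_{h_i})=0$, and invariance of $\Gamma_\alpha^\mu$ is precisely what decouples the flow and jump contributions so that each vanishes against the limiting reach values. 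Since the common-part monodromy is Schur in the directions transverse to $\mathcal{K}$ (Proposition~\ref{prop:discs}), the common part converges to this unique constant profile, and splitting each $\gamma_\alpha^i$ cell-wise as in \eqref{eq:gammai_n} gives the $k$ distinct asymptotic values of~(iii), in analogy with the purely continuous-time case of Section~\ref{sec:recalls} (with $L$ replaced by $L_\alpha$).

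The main obstacle is the eigenstructure identification used in item~(i), and again inside~(iii): one must show that the left Perron eigenvector of the hybrid monodromy $\Phi_i$ coincides, up to normalization, with the left null vector of the weighted Laplacian $L_{\mathrm{f},i}+\alpha L_{\mathrm{j},i}$, and that the invariance of $\Gamma_\alpha^\mu$ is exactly the structural condition making the two Laplacian contributions in the common-part fixed-point equation cancel separately rather than only jointly. The remaining steps are bookkeeping with the block forms \eqref{Lu}--\eqref{Lq} together with the convergence already established in Theorem~\ref{th:conv}.
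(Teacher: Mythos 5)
Your route is the same one the paper sketches (its proof is a one-line appeal to the triangular form \eqref{Lq} together with Theorems \ref{th:AEP_c} and \ref{th:conv}), and the structural part of your argument is fine: exclusive reaches of $\mathcal{G}_\mathrm{un}$ receive no incoming edges in either graph, so their hybrid dynamics decouple, and the common part is a reduced hybrid linear system driven by the reach states, which gives item (ii) once the limits of the $\mathbf{x}_i$ are substituted.

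The genuine gap is exactly the step you label ``the main obstacle'' and then leave unresolved: the identification of the left Perron eigenvector of the reach monodromy $\Phi_i=(I-\alpha L_{\mathrm{j},i})e^{-\tau L_{\mathrm{f},i}}$ with the left null vector $v_{\alpha,i}$ of the weighted Laplacian $L_{\mathrm{f},i}+\alpha L_{\mathrm{j},i}$, and the analogous claim in (iii) that invariance of $\Gamma_\alpha^\mu$ makes the flow and jump contributions cancel separately. Theorems \ref{th:AEP_c}--\ref{th:conv} only give convergence to the subspace $\mathcal{K}$, not the asymptotic \emph{values}, so this identification carries all the content of items (i) and (iii); asserting it is not proving the corollary. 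Moreover, as stated it does not hold in general: take a two-node reach with flow edge $\nu_1\to\nu_2$ and jump edge $\nu_2\to\nu_1$, i.e. $L_{\mathrm{f},i}=\left(\begin{smallmatrix}0&0\\-1&1\end{smallmatrix}\right)$ and $L_{\mathrm{j},i}=\left(\begin{smallmatrix}1&-1\\0&0\end{smallmatrix}\right)$. The left kernel of $L_{\mathrm{f},i}+\alpha L_{\mathrm{j},i}$ is spanned by $(1,\ \alpha)$, while the left Perron vector of $\Phi_i$ is proportional to $(1-e^{-\tau},\ \alpha e^{-\tau})$, which depends on $\tau$ (and, for aperiodic jumps, on the whole jump pattern): the two coincide only under extra structure (e.g.\ commuting reach sub-Laplacians or a common left null vector of $L_{\mathrm{f},i}$ and $L_{\mathrm{j},i}$). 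So a complete proof of (i) must either add such a hypothesis or compute $x_i^{ss}$ directly from the monodromy rather than from $L_\alpha$. Similarly for (iii): $L_\alpha z_{\alpha,i}=0$ only gives $(M_{\mathrm{f},i}\ones_{h_i}+M_\mathrm{f}\gamma_\alpha^i)+\alpha(M_{\mathrm{j},i}\ones_{h_i}+M_\mathrm{j}\gamma_\alpha^i)=0$, and $M_\mathrm{f}$-, $M_\mathrm{j}$-invariance of $\Gamma_\alpha^\mu$ does not by itself force each bracket to vanish; you need to argue why the joint cancellation splits. To be fair, the paper's own one-sentence proof does not supply these steps either, but your write-up makes the missing argument explicit without closing it, so items (i) and (iii) remain unproved as stated.
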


\smallskip
\begin{proof}
The proof follows from the triangular form of the involved Laplacians (\ref{Lq}) and Theorems \ref{th:AEP_c} and \ref{th:conv}.  
\end{proof}

As stated in the result above, consensus in hybrid networks (even when composed of scalar integrators) is a hybrid trajectory: nodes in the reaches tend to a constant consensus value given by the weighted mean of the corresponding agent's initial states; nodes in the common tend to a hybrid consensus arc which is however bounded. 

\begin{remark}
The multi-consensuses of the hybrid network are all parameterized by the coupling strength $\alpha>0$ in (\ref{eq:hybrid_sys}) and are independent, as expected, of the jumping times and flow periods. 
\end{remark}

\section{Examples}\label{sec:example}
In this section, different networks composed of $N = 7$ hybrid agents of the form (\ref{eq:hybrid_agent}) are considered. In all cases, the initial conditions are set as $x_i(0, 0) = i$ for $i = 1, \dots, 7$ with aperiodic jumps with $\tau_\text{min} = 0.1$ and $\tau_\text{max} = 1$ seconds. For completeness, the behaviors of the purely continuous-time networks evolving under either $\mathcal{G}_\mathrm{f}$ and $\mathcal{G}_\mathrm{j}$ are reported to highlight the influence of the hybrid connection compared to the standalone ones. In all cases, the coupling strength for the jump component of the dynamics in \ref{eq:inputs} is set to $\alpha = \frac{1}{5}.$
\begin{example}
Let us consider graphs as in Fig.~\ref{fig:ex1}. 
In this case, applying the result in Theorem \ref{th:AEP_c} and along the lines of Remark \ref{rmk:5}, the union graph the AEP underlying consensus is given by the trivial one composed of one cell with all nodes included; namely, one obtains
\begin{align*}
    \pi^\star_\mathcal{H} =& \{ \{\nu_1,\nu_2,\nu_3,\nu_5, \nu_5, \nu_6, \nu_7\}\}.
\end{align*}
As a consequence, although the standalone flow and jump graphs induce three and two consensuses respectively with
\begin{align*}
    \pi^\star_\mathrm{f} =& \{ \{\nu_1,\nu_2,\nu_3 \}, \{\nu_4,\nu_5 \}, \{\nu_6,\nu_7\} \} \\
    \pi^\star_\mathrm{j} =& \{ \{\nu_4,\nu_7 \}, \{\nu_1, \nu_2, \nu_3, \nu_5, \nu_6\} \},
\end{align*} nodes of the hybrid network converge to a unique asymptotic agreement which is given by
\begin{align*}
    x^{ss} =& v^\top_\alpha x_0\\ v^\top_\alpha =&\frac{1}{16} \begin{pmatrix}
   0.1 & 0.12& 0.66 & 0.051& 0.048& 0.012& 6.8e^{-3}
    \end{pmatrix}
\end{align*}
with $v^\top_\alpha$ being the left normalized eigenvector associated with the zero eigenvalue of the Laplacian of the union graph $\mathcal{G}_\mathrm{un}$.
More in detail, as underlined by Figure \ref{fig:sim_ex1}, in this case one gets the constant consensus value $x^{ss} = 2.87$.
\begin{figure}[h!]
\centering
\begin{subfigure}[b]{0.49\columnwidth}
\centering
\includegraphics[width=1\textwidth]{./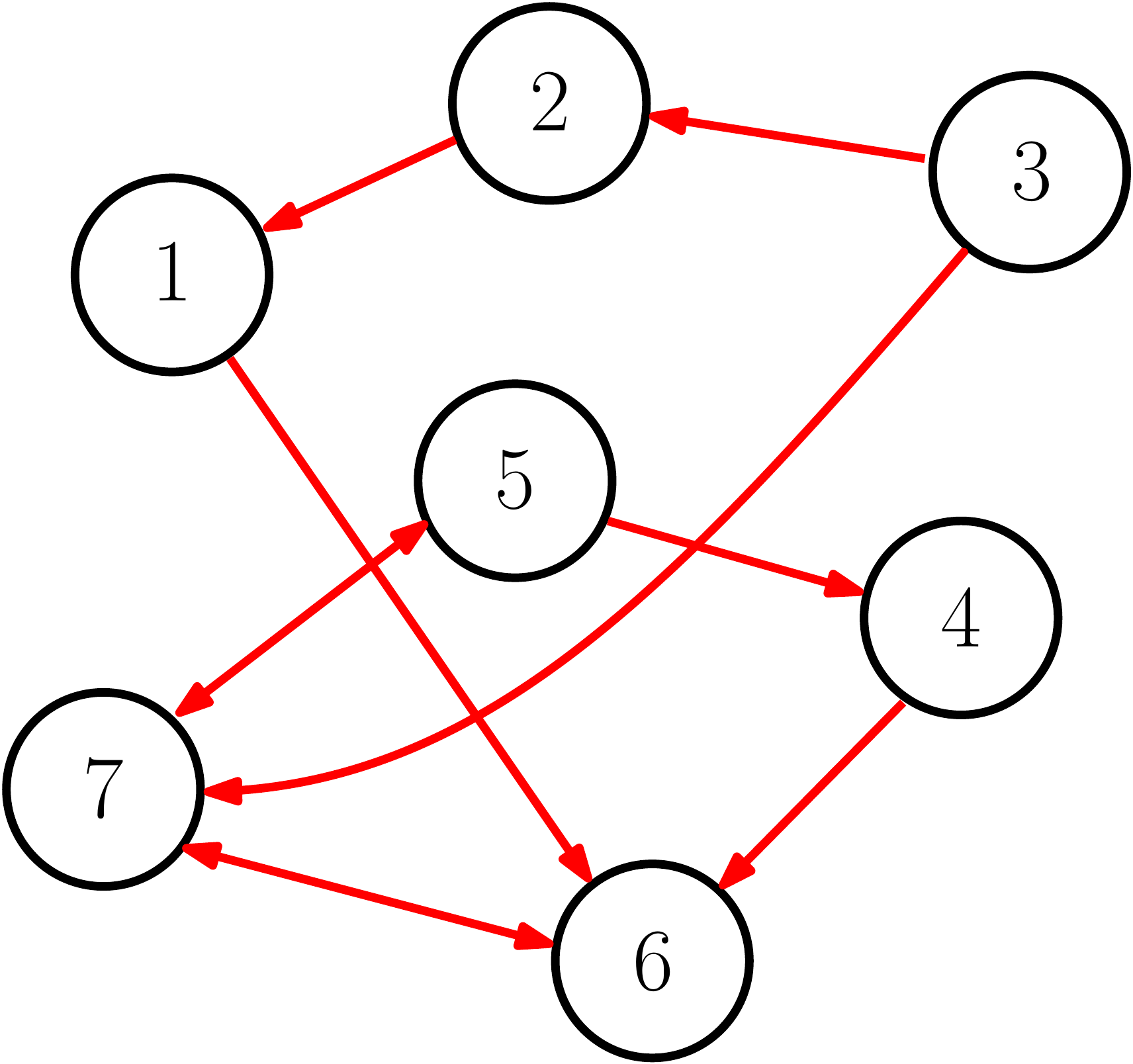}
\caption{Flow graph}
\label{fig:ex1_flow}
\end{subfigure}
\hfill
\begin{subfigure}[b]{0.49\columnwidth}
\centering
\includegraphics[width=1\textwidth]{./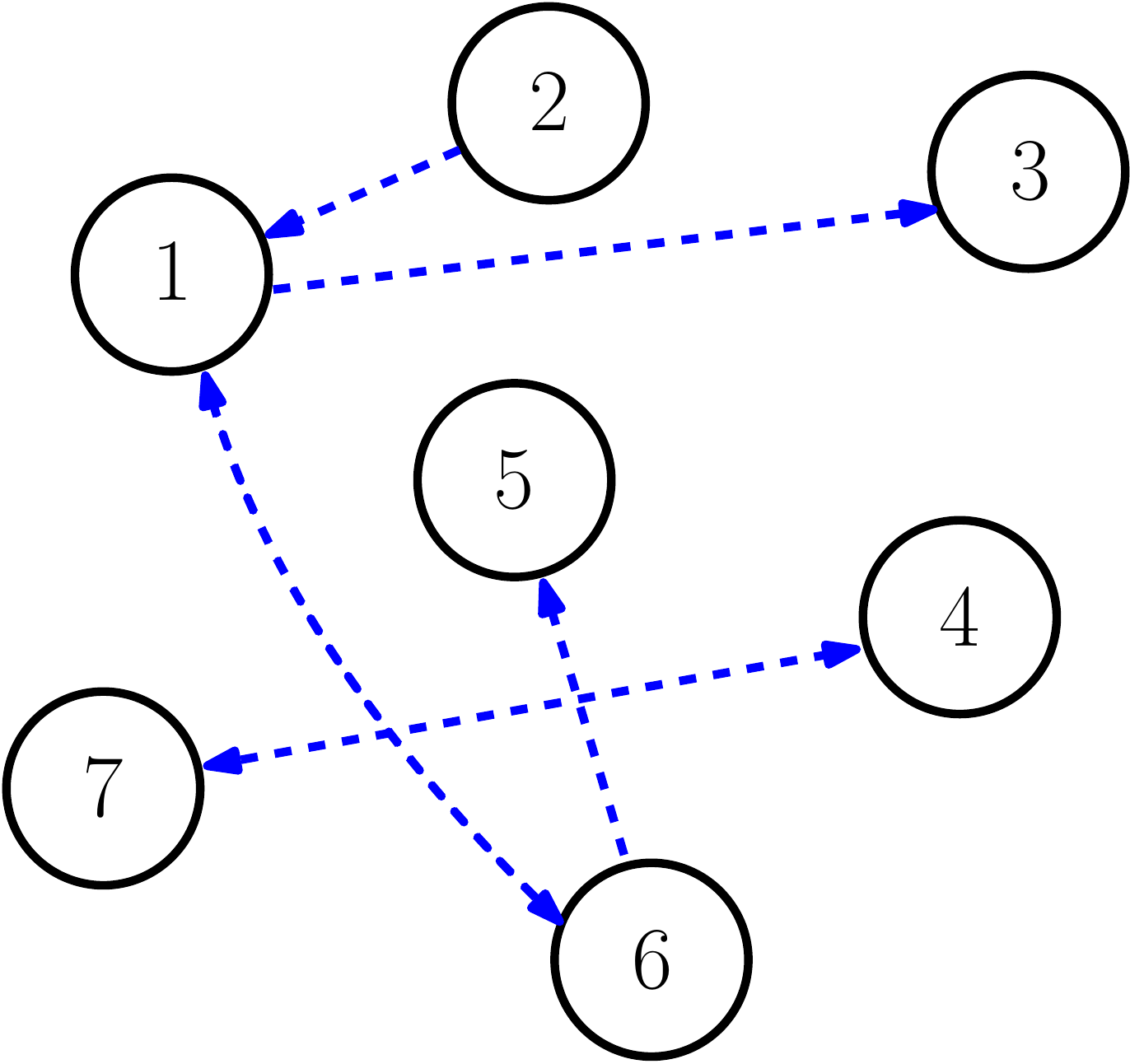}
\caption{Jump graph}
\label{fig:ex1_jump}
\end{subfigure}
\caption{Graphs for Example~1}\label{fig:ex1}
\end{figure}
\begin{figure}[h!]
\centering
\includegraphics[width=\textwidth]{./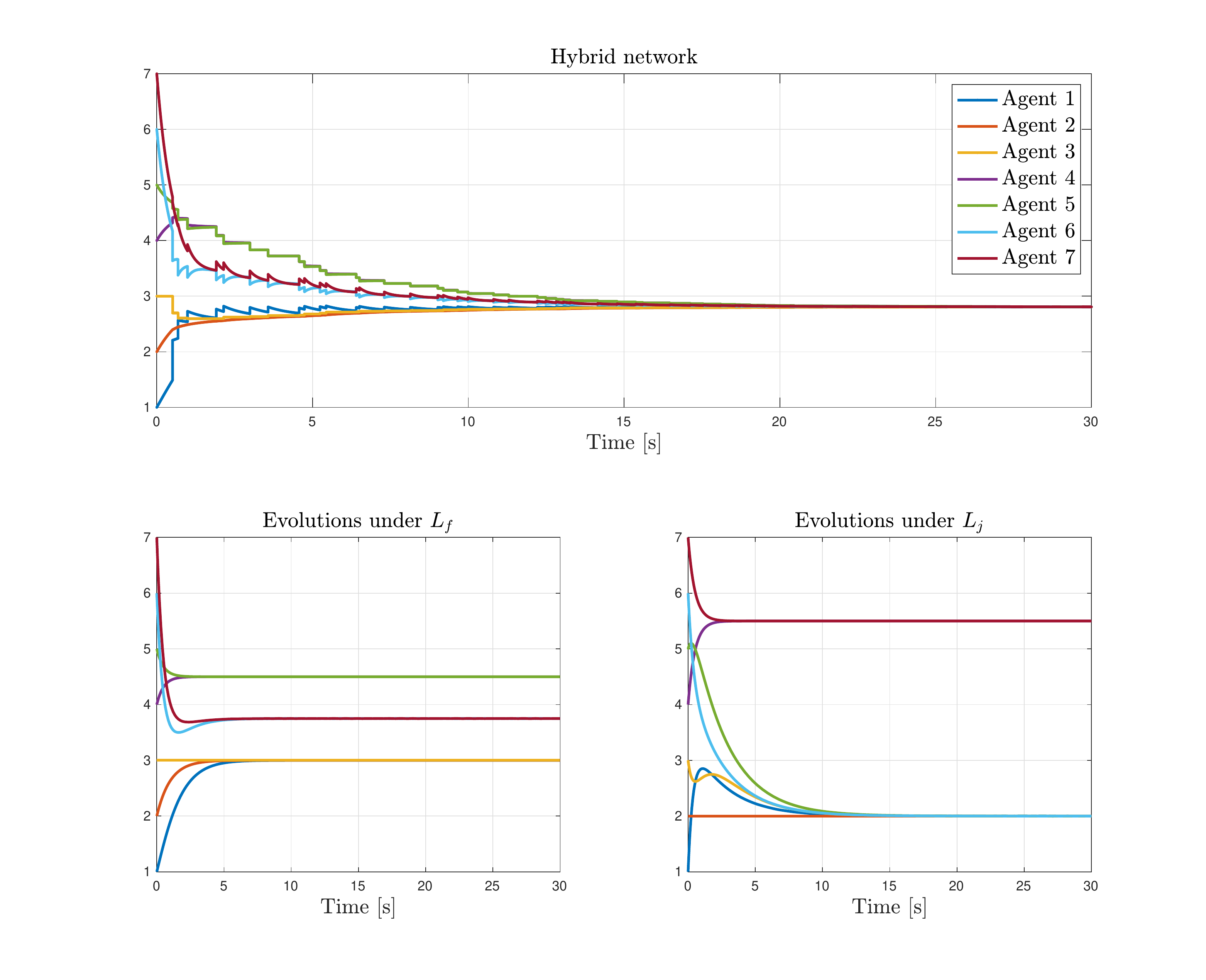}
\caption{Simulations for Example 1}
\hfill
\label{fig:sim_ex1}
\end{figure}
\end{example}

\begin{example}
\begin{figure}[h!]
\centering
\begin{subfigure}[b]{0.49\columnwidth}
\centering
\includegraphics[width=1\textwidth]{./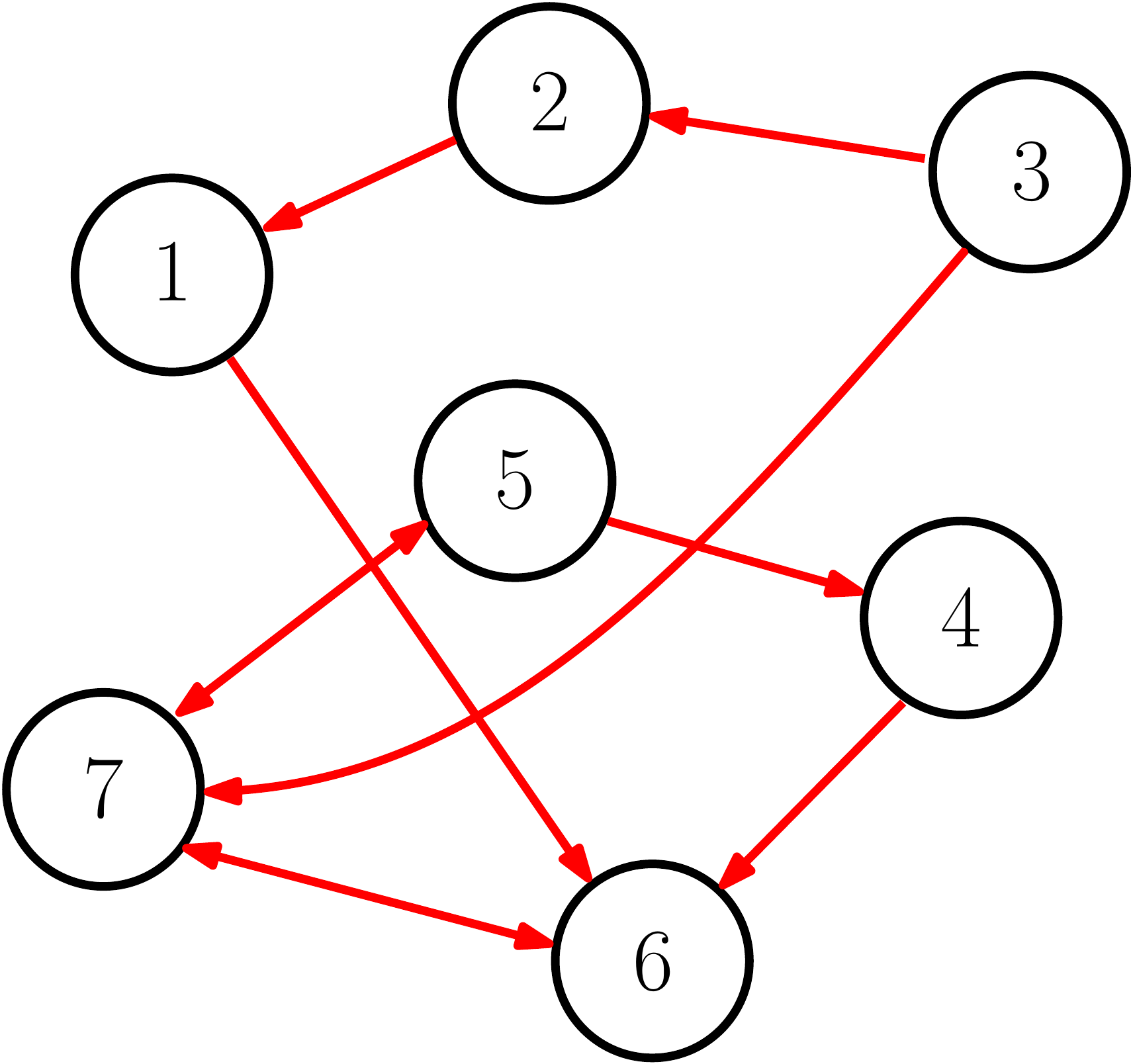}
\caption{Flow graph}
\label{fig:ex2_flow}
\end{subfigure}
\hfill
\begin{subfigure}[b]{0.49\columnwidth}
\centering
\includegraphics[width=1\textwidth]{./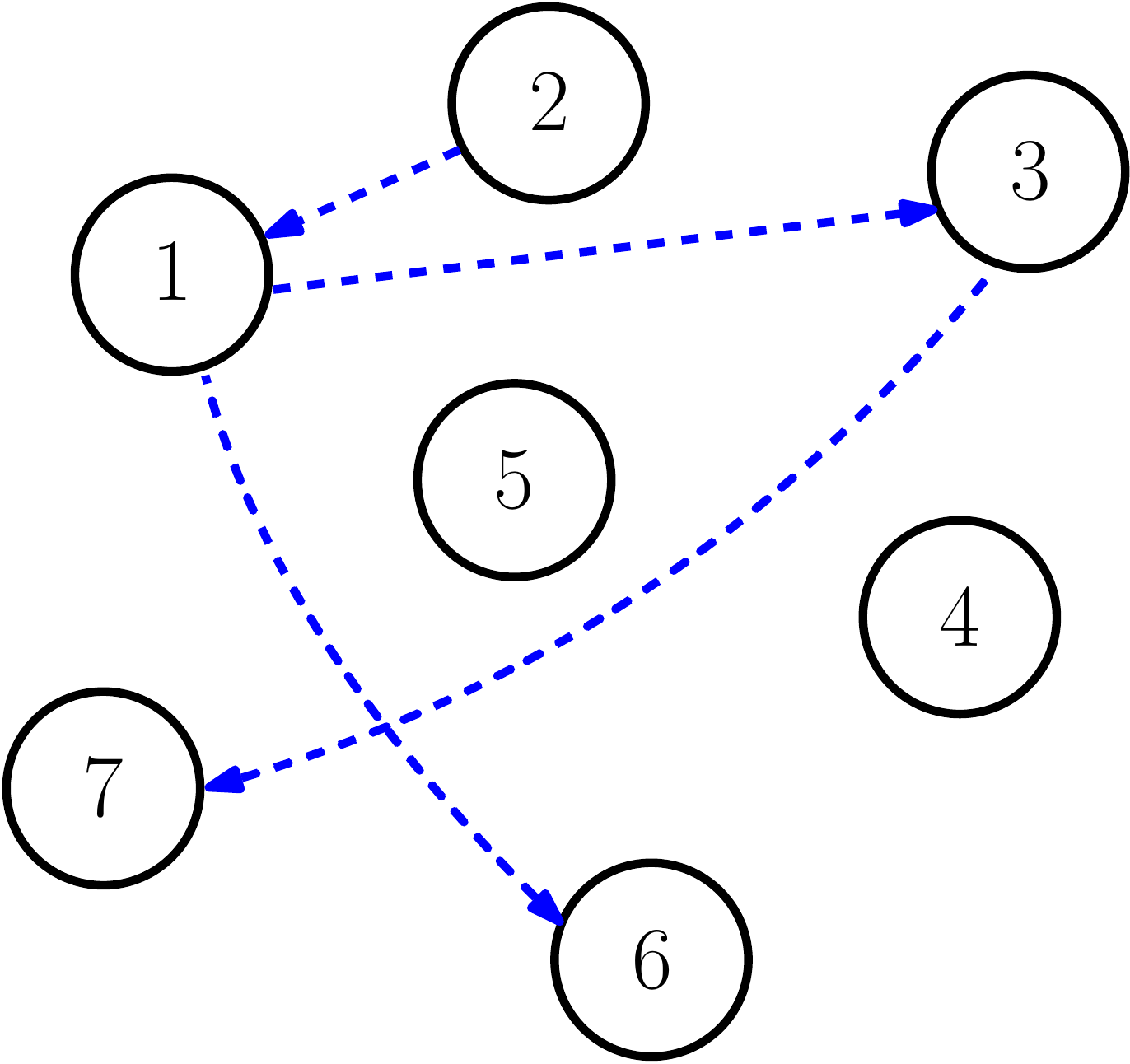}
\caption{Jump graph}
\label{fig:ex2_jump}
\end{subfigure}
\caption{Graphs for Example~2}\label{fig:ex2}
\end{figure}
Let us consider graphs as in Fig.~\ref{fig:ex2}. In this case, invoking Theorem \ref{th:AEP_c}, nodes of the hybrid network cluster according to the AEP
\begin{align*}
    \pi^\star_\mathcal{H} =\{
    \mathcal{H}_{\mathrm{un},1},
    \mathcal{H}_{\mathrm{un},2}, \rho_{3}\}
\end{align*}
with the exclusive reaches of the union graph $\mathcal{H}_{\mathrm{un},1} = \{\nu_1, \nu_2, \nu_3 \}$, $\mathcal{H}_{\mathrm{un},2} = \{\nu_4, \nu_5\}$ and $\rho_3 = \mathcal{C}_\mathrm{un} = \{\nu_6,\nu_7 \}$ associated with $
M_\text{int} = \mathbf{0}$. In addition, according to Corollary \ref{cor:1}, nodes in the exclusive reaches converge to the consensuses 
\begin{align*}
    x_{1}^{ss} =& v_{\alpha,1}^\top \mathbf{x}_1(0, 0), \quad v_{\alpha,1}^\top = \frac{30}{41}\begin{pmatrix} \frac{1}{6} & \frac{1}{5} & 1 \end{pmatrix}\\
    x_{2}^{ss} =& v_{\alpha,2}^\top \mathbf{x}_2(0, 0), \quad v_{\alpha,2}^\top = \frac{1}{2}\begin{pmatrix} 1 & 1 \end{pmatrix}
\end{align*}
so getting $x_1^{ss} = 2.6098$ and $x_2^{ss} = 4.5$. On the other side, as shown in Figure \ref{fig:sim_ex2}, nodes in the cell associated with the common component converge to the same hybrid consensus arc described by
\begin{align*}
    \dot{\mathbf{x}}_\delta =& -\begin{pmatrix}
    3 & -1 \\ -1 & 3
    \end{pmatrix}{\mathbf{x}}_\delta-\begin{pmatrix}
    1 \\ 1
    \end{pmatrix} x_{1}^{ss} - \begin{pmatrix}
    1 \\ 1
    \end{pmatrix} x_{2}^{ss} \\
    {\mathbf{x}}_\delta^+ =& -\begin{pmatrix}
    1-\alpha & 0 \\ 0 & 1-\alpha
    \end{pmatrix}{\mathbf{x}}_\delta-\begin{pmatrix}
    1 \\ 1
    \end{pmatrix} x_{1}^{ss}.
\end{align*}
\begin{figure}[h!]
\centering
\includegraphics[width=\textwidth]{./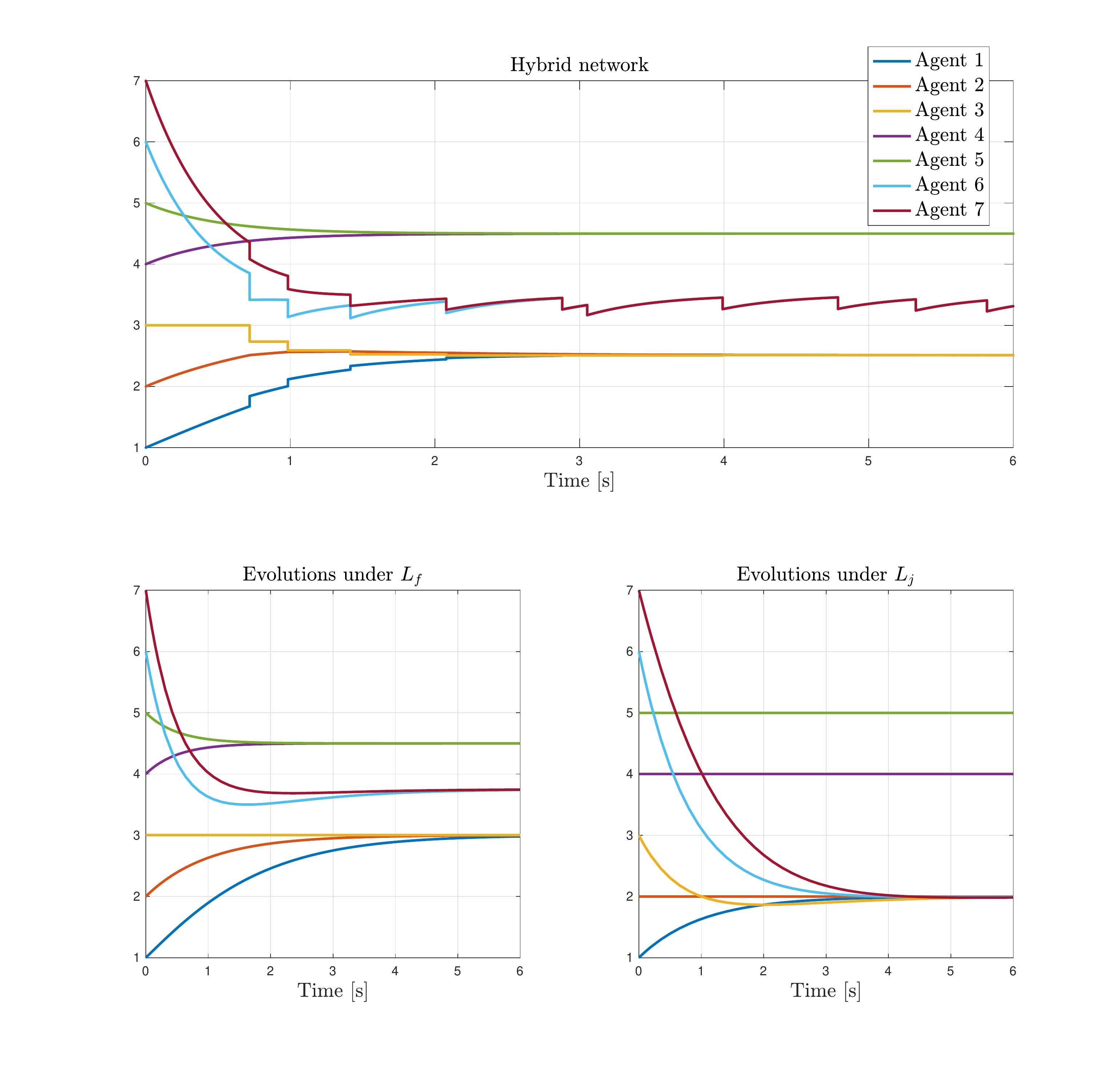}
\caption{Simulations for Example 2}
\hfill
\label{fig:sim_ex2}
\end{figure}
\end{example}

\begin{example}

\begin{figure}[h!]
\centering
\begin{subfigure}[b]{0.49\columnwidth}
\centering
\includegraphics[width=1\textwidth]{./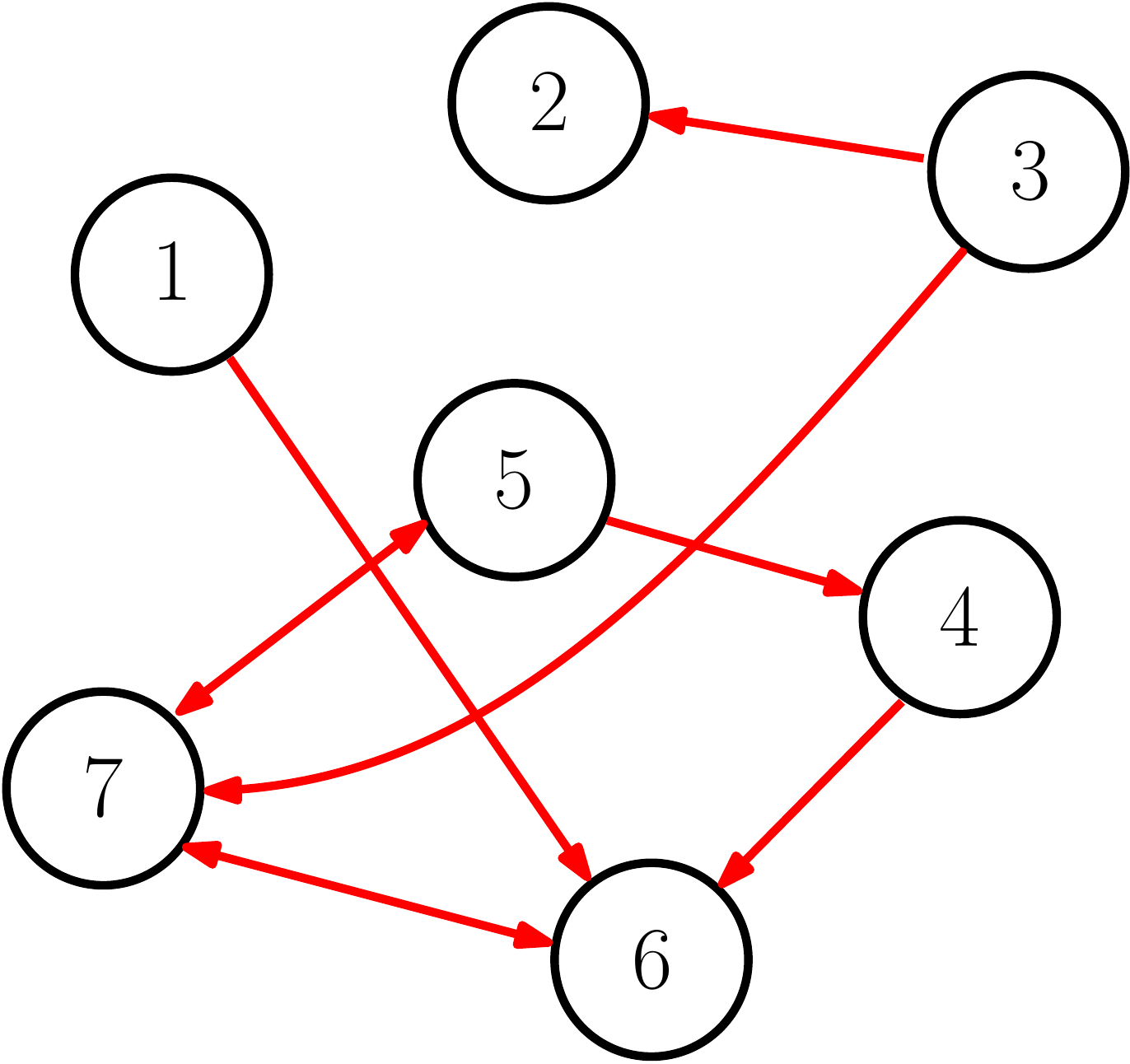}
\caption{Flow graph}
\label{fig:ex3_flow}
\end{subfigure}
\hfill
\begin{subfigure}[b]{0.49\columnwidth}
\centering
\includegraphics[width=1\textwidth]{./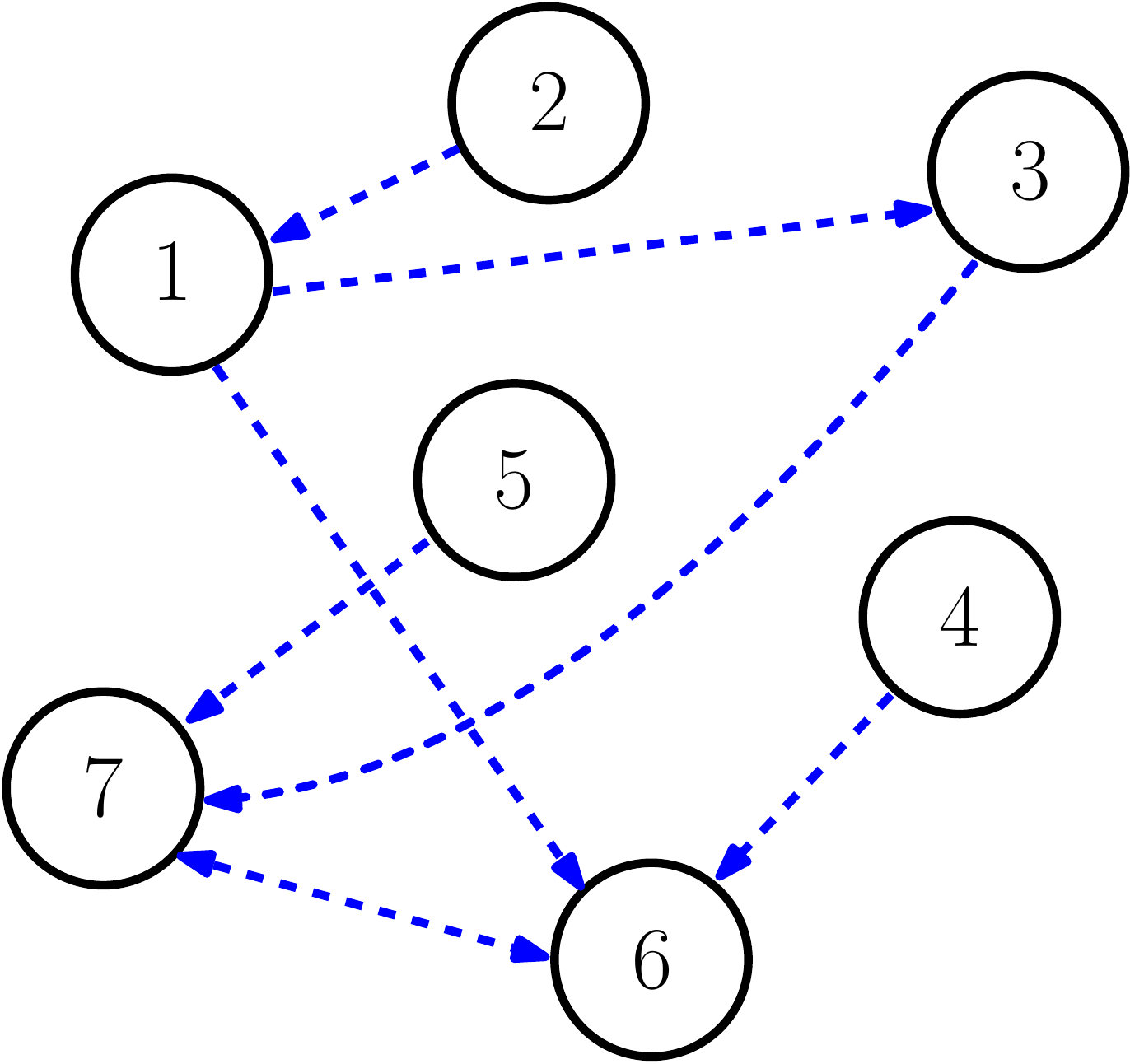}
\caption{Jump graph}
\label{fig:ex3_jump}
\end{subfigure}
\caption{Graphs for Example~3}\label{fig:ex3}
\end{figure}
Let us consider graphs as in Fig.~\ref{fig:ex3}. In this case, three consensuses are reached for the hybrid network according to the clusters defined by the AEP $\pi^\star_\mathcal{H} = \{\mathcal{H}_{\mathrm{un},1}, \mathcal{H}_{\mathrm{un},2}, \rho_3\}$ with $\mathcal{H}_{\mathrm{un},1}, = \{\nu_1, \nu_2, \nu_3\}$, $\mathcal{H}_{\mathrm{un},2}, = \{\nu_4, \nu_5\}$ and $\rho_3 = \{\nu_6, \nu_7 \}$. As in the previous cases, consensus over the reaches are constant and given by
\begin{align*}
     x_{1}^{ss} =& v_{\alpha,1}^\top \mathbf{x}_1(0, 0), \quad v_{\alpha,1}^\top = \frac{1}{11}\begin{pmatrix} 5 & 1 & 6 \end{pmatrix}\\
    x_{2}^{ss} =& v_{\alpha,2}^\top \mathbf{x}_2(0, 0), \quad v_{\alpha,2}^\top = \frac{1}{2}\begin{pmatrix} 1 & 1 \end{pmatrix}
\end{align*}
so getting $x_1^{ss} = 2$ and $x_2^{ss} = 4.5$. In this case, moreover, one gets that the last components of the eigenvectors in (\ref{eig:L_a}) are given by
\begin{align*}
    \gamma^1_{\alpha} = \gamma^2_{\alpha} = \frac{1}{2}\ones_2
\end{align*}
so getting that $\Gamma^\mu_\alpha = \text{span}\{\ones_2\}$ is both $M_\mathrm{f}$ and $M_\mathrm{j}$ invariant with
\begin{align*}
    M_\mathrm{f} = M_\mathrm{j} = \begin{pmatrix}
    3 & -1
    \\ -1 & 3
    \end{pmatrix}.
\end{align*}
Thus, by Corollary \ref{cor:1}, nodes in the cell $\rho_3$ associated with the common $\mathcal{C}_\mathrm{un}$ converge to a constant consensus which is given by
\begin{align*}
    x_{\delta, 1}(t, k) \to \frac{1}{2} (x_1^{ss}+ x_2^{ss})\ones_2
\end{align*}
so getting in this case the constant consensus $ x_{\delta, 1}(t, k) \to 3.25 \ones_2$ as depicted in Figure \ref{fig:sim_ex3}.
\begin{figure}[h!]
\centering
\includegraphics[width=\textwidth]{./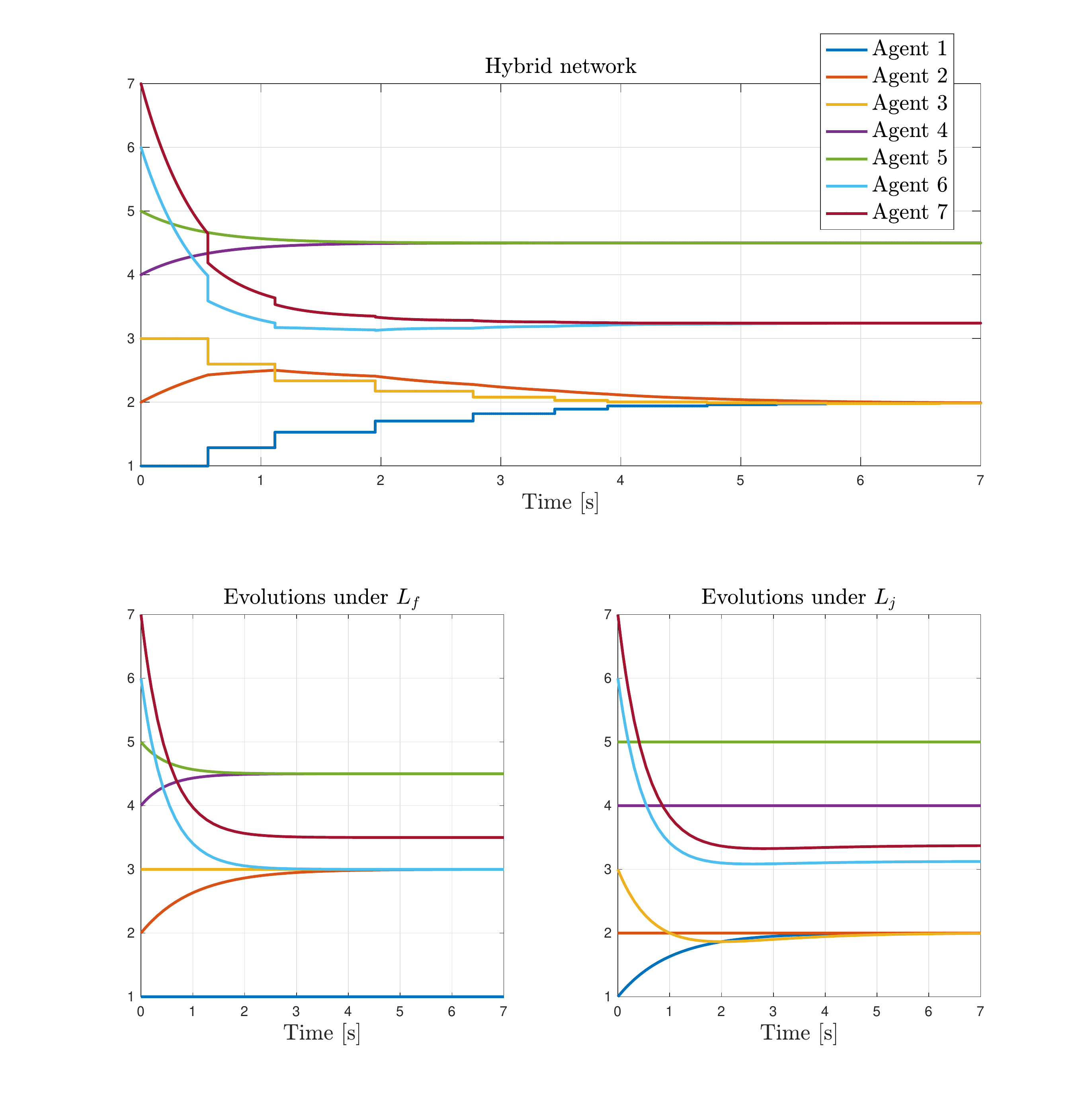}
\caption{Simulations for Example 3}
\hfill
\label{fig:sim_ex3}
\end{figure}
\end{example}

\section{Conclusions} \label{sec:conclusions}
In this paper, we have provided a full characterization of the multi-consensus for scalar hybrid systems under aperiodic time-driven jumps evolving over networks. Considering different communication graphs for the flow and jump dynamics, it has been shown that the clusters defining the behavior of the hybrid agents are identified by the coarsest partition which is almost equitable for both $\mathcal{G}_\mathrm{f}$ and $\mathcal{G}_\mathrm{j}$. More in detail, the number of multi-consensuses is strictly related to the union and intersection graphs associated with $\mathcal{G}_\mathrm{f}$ and $\mathcal{G}_\mathrm{j}$ with as many consensuses as the number of exclusive reaches of the union plus further clusters induced over the remaining common component by  the intersection graph. In addition, although consensuses over the reaches are constant behaviors, the asymptotic agreements over the common components are generally described by hybrid arcs. Future perspectives concern the characterization of multi-consensus for larger classes of hybrid systems with the inclusion of state-driven jumps and non-scalar agent dynamics with emphasis to the open problems detailed in \cite{MAGHENEM2020335}.

\section*{Acknowledgment}
The Authors wish to thank Prof. Salvatore Monaco for the fruitful discussions and remarkable remarks on these topics.

\bibliographystyle{elsarticle-harv} 
\bibliography{Biblio}
\end{document}